\documentclass[final,oneside,notitlepage,10pt,letterpaper]{article}
\usepackage{iftex}
\ifPDFTeX\pdfoutput=1\fi
\usepackage[T1]{fontenc}
\usepackage{lmodern}
\usepackage[english]{babel}
\usepackage[left=3cm,right=3cm,top=3cm,bottom=3cm]{geometry}
\usepackage{amsmath,amssymb,amsthm,mathtools}
\usepackage{microtype}
\usepackage{enumitem}
\usepackage{fancyhdr}
\usepackage{xcolor}
\usepackage[hypertexnames=false]{hyperref}
\definecolor{linkblue}{RGB}{0,0,120}
\hypersetup{colorlinks=true,linkcolor=linkblue,citecolor=linkblue,urlcolor=linkblue,linktocpage=true}
\renewcommand{\baselinestretch}{1.05}
\setlist{itemsep=0.3ex,topsep=0.5ex}
\newtheoremstyle{paperthm}{1em}{1em}{\itshape}{0pt}{\bfseries}{.}{.5em}{}
\newtheoremstyle{paperdef}{1em}{1em}{\normalfont}{0pt}{\bfseries}{.}{.5em}{}
\theoremstyle{paperthm}
\newtheorem{theorem}{Theorem}[section]
\newtheorem{proposition}[theorem]{Proposition}
\newtheorem{lemma}[theorem]{Lemma}

\theoremstyle{paperdef}

\numberwithin{equation}{section}
\DeclareMathOperator{\Ad}{Ad}
\DeclareMathOperator{\im}{im}
\DeclareMathOperator{\Pair}{Pair}
\DeclareMathOperator{\tr}{tr}
\newcommand{\R}{\mathbb R}
\newcommand{\U}{\mathrm U}
\newcommand{\SU}{\mathrm{SU}}
\newcommand{\fg}{\mathfrak g}
\newcommand{\fh}{\mathfrak h}
\newcommand{\ff}{\mathfrak f}
\newcommand{\fa}{\mathfrak a}
\newcommand{\F}{\mathcal F}
\newcommand{\Hk}{\mathcal H_\kappa}
\newcommand{\Gadj}{\mathscr G_{\mathrm{adj}}}
\newcommand{\Gapt}{\mathscr G_{\mathrm{ad}}}
\newcommand{\disc}{\mathrm{disc}}
\newcommand{\dH}{d_{\mathrm H}}
\usepackage{mathrsfs}

\title{Simplifying 2-Bundle Descriptions of Invariants and Local Dynamics}
\author{Ferreira Bastien}
\date{}

\makeatother

\begin{document}
\maketitle

\vspace{-5mm}
\begin{abstract}
A string couples to a two-form gauge potential whose local descriptions need not fit together as a single global form. Principal 2-bundles provide a geometric description of this gluing. However, we recover surface amplitudes and quantum partition functions, as well as local classical field dynamics, without the 2-bundle formalism. For global phases, topological and index invariants, and twisted-sheaf moduli, established equivalences give alternative descriptions through cocycles, bundles, operators, or modules, with the required global compatibility data. For three-dimensional finite gauge theories with cohomological actions, Fourier transformation expresses the partition function as a sum of partition functions of ordinary finite-group gauge theories. An invertible local substitution reduces invariant classical actions for a class of 2-bundle connections represented by finite-dimensional crossed modules to ordinary gauge and tensor dynamics. These finite and local reductions explicitly eliminate field variables. A four-dimensional quadratic example instead admits a direct cochain evaluation even when an ordinary gauge-theory reduction fails. These results therefore do not require the additional 2-bundle presentation. 
\end{abstract}

\vspace{-5mm}
{\setcounter{tocdepth}{1}\setlength{\parskip}{0pt}\tableofcontents}

\section{Introduction}
\label{sec:introduction}

A string in a background two-form field acquires a phase that depends on its worldsheet. If the potential \(B\) is globally defined and \(f:\Sigma\to X\) is the worldsheet map, the phase is \(\exp(2\pi i\int_\Sigma f^*B)\). On a topologically nontrivial background, local potentials must instead be joined by one-forms on double overlaps and functions on triple overlaps. An abelian gerbe with connection organizes precisely these data \cite{MackaayPicken,AschieriJurco2004}. It is a basic example of a principal 2-bundle, the broader geometric object studied here \cite{RistSaemannWolf2026,Waldorf2026}. The resulting phase can change a spectrum or cancel an anomaly, so the global data are physically consequential \cite{VafaWitten1995,FreedWitten1999}. The physical effect is real. The question is whether obtaining it requires the 2-bundle presentation. It also arises across more general physical and mathematical topics. In each setting, we ask whether the higher presentation supplies information or an operation absent from the original description, or whether the stated result follows directly from cocycles, ordinary bundles, operators, modules, or a change of fields. Across the applications studied here, these descriptions suffice. In several cases, their equivalence with the higher geometric description is itself a theorem \cite{MackaayPicken,Waldorf2016,KapustinThorngren2017}. An equivalent presentation preserves the required data and operations; whether it simplifies their use is a further question. We therefore identify what the alternative actually does, from evaluating a phase to eliminating a field. The existence of a higher geometric interpretation does not make that presentation necessary for obtaining the result.

The example above shows how this comparison works. We can calculate the surface phase by integrating the local forms and multiplying their transition factors. Equivalently, a differential character is a multiplicative function on two-cycles satisfying the appropriate Stokes relation. It includes flat phases that the curvature alone would miss and determines the connection class of the abelian 2-bundle \cite{AschieriJurco2004}. The global information is therefore complete before the geometric object is reconstructed from it. The later examples apply the same standard while preserving the compatibility conditions and operations on which each result depends.

For fixed backgrounds, we ask whether the proposed higher invariant contains information absent from the original description. In the Wess--Zumino--Witten model, the phase in the multiplication law detects the obstruction to a compatible Chern--Simons construction. Winding numbers already compute the same obstruction \cite{GawedzkiWaldorf2009}. In a covering-group construction of magnetic flux, the surface charge is fixed by the original bundle's equatorial transition function \cite{Parzygnat2015}. These comparisons depend on global compatibility and gluing data. Curvature alone need not preserve the information being compared. Band topology and duality admit similar comparisons. The Dixmier--Douady number in a specified band model equals its winding number, and an experiment obtains the corresponding charge from ordinary Berry curvature or the quantum metric \cite{PalumboGoldman2019,ChenEtAl2022}. For mapping-space T-duality, localization determines the proposed maps and characteristic classes from data on the original target \cite{BouwknegtEvslinMathai2004,HanMathai2015,HanMathai2026}. Two proposed band-theory extensions fail in their defining formulas. The proposed wedge powers of a scalar three-form vanish from the second power onward \cite{PalumboGoldman2019}. Differentiating the line integral in Ref.~\cite{JankowskiSlagerPalumbo2025} produces a curvature term omitted from its reconstruction formula.

The analytic index gerbe records a family of fermionic operators through determinant lines of finite spectral subspaces and eta forms \cite{Lott2002}. These data give the complete differential cocycle, and its transgression is an ordinary determinant line with an adiabatic connection \cite{Bunke2002}. Algebraic geometry provides a parallel comparison: in the arithmetic period--index problem, twisted sheaves can be expressed as modules over a chosen Azumaya algebra, preserving families and deformation groups \cite{Lieblich2008}. The difficult existence theorem for modules of sufficiently small size remains necessary \cite{Lieblich2015}. Replacing the gerbe presentation does not turn either the analytic or the arithmetic theorem into an elementary one.

For a dynamical theory, matching a fixed-background amplitude does not yet compare the field theories. We must follow the action, gauge identifications, and sum over fields. Finite gauge theories make this question physically concrete. They describe long-distance properties of gapped phases, including ground-state degeneracy and the phases acquired when particles and strings are linked \cite{DijkgraafWitten1990,KapustinSeiberg2014,KapustinThorngren2017}. In three dimensions, character orthogonality performs the two-form field sum even in the presence of a cohomological action. The partition function is a sum of partition functions of ordinary finite-group gauge theories, whose groups and actions we identify explicitly. In four dimensions, the same argument gives an ordinary gauge theory and an Euler factor for the zero-action theory. A quadratic action can prevent this reduction: we compute an example whose remaining response is a signature phase. These restrictions on the action are part of the dualization in Refs.~\cite{KapustinThorngren2017,KapustinThorngren2013Theta}. The ordinary theory, when it exists, is obtained by evaluating part of the original sum.

Compact abelian BF theory gives the particle--string linking phase by integrating out a field. A higher gauge formulation associated with the Euclidean or Poincar\'e group also has an ordinary BF action differing by a boundary term \cite{GirelliTsimiklis2022}. That term changes which boundary variables are treated as coordinates and momenta, while leaving the bulk equations unchanged. The comparison concerns these explicit actions. The constraints and quantization needed to obtain gravity are further ingredients.

For smooth fields, we study principal 2-bundles through the nonabelian bundle-gerbe model with adjusted connections of Ref.~\cite{Waldorf2026}. We work with the finite-dimensional class defined in Section~\ref{sec:data}. Using its gauge laws, we make an invertible local substitution into an ordinary quotient connection and two tensor fields. One additional one-form variable is pure gauge. Every local action invariant under its shift descends to the remaining fields, up to a boundary term, with the same local classical equations. In these variables, a quadratic action describes an ordinary massive two-form, while more general invariant actions give tensor interactions. The inference from abelian gauge symmetry to a free theory, suggested in the motivating discussions of Refs.~\cite{SaemannSchmidt2020,RistSaemannWolf2026}, fails for these actions: the four-point amplitude is nonzero.

The paper combines established equivalent presentations with explicit reductions and corrections to proposed formulas. Sections~\ref{sec:surface}--\ref{sec:tduality} and Section~\ref{sec:algebra} use transport, localization, index, and module constructions to reproduce the specified results. The finite Fourier derivations in Section~\ref{sec:finite} implement the dualizations of Ref.~\cite{KapustinThorngren2017}, keeping their action restrictions and normalization. The quadratic example specializes the theory of Ref.~\cite{KapustinThorngren2013Theta}. For smooth fields, the invariance and shift identities of Ref.~\cite{Waldorf2026} give the global decomposition in Theorem~\ref{thm:global}. We make the local substitution and residual transformations explicit, prove the action reduction in Theorem~\ref{thm:actions}, and calculate the resulting dynamics in Section~\ref{sec:physics}. The band-theory counterexamples in Section~\ref{sec:bands} identify failures of particular formulas, separately from the comparison of equivalent presentations. None of the closed-partition-function comparisons is used to infer equivalence of complete quantum theories.

\section{Conventions for connections and descent}
\label{sec:background}

We use \emph{principal 2-bundle} as the main term for the higher gauge geometry studied in this paper. An abelian \(\U(1)\) gerbe is the special case with structure 2-group \(B\U(1)\), represented by the crossed module \(\U(1)\to\{1\}\). For general structure 2-groups, nonabelian bundle gerbes provide a geometric model of the same 2-bundle descent data \cite{RistSaemannWolf2026,Waldorf2026}. We use the word \emph{gerbe} when referring to that model or to a specific abelian construction. Thus the comparisons below concern principal 2-bundles themselves, through explicit models of their fields and gluing.

A Lie 2-group is the structure 2-group of a 2-bundle. A global 2-group symmetry combines zero-form and one-form symmetries through their background transformations and belongs to the broader subject of generalized symmetries \cite{BhardwajEtAl2023}. An abelian \emph{bundle 2-gerbe} lies one degree higher than an abelian gerbe and appears below as the obstruction to lifting an ordinary bundle. It is not another name for a principal 2-bundle.

We assume the usual differential geometry of principal bundles and recall the conventions needed for the comparisons. On a right principal \(G\)-bundle \(P\to X\), choose sections with
\begin{align}
&s_j=s_i g_{ij}^{-1},\qquad g_{ik}=g_{jk}g_{ij}
\label{eq:ordinary-transition}
\end{align}
on overlaps. Write \(\fg=\operatorname{Lie}(G)\), let \(\Theta\) be the connection on \(P\), and set \(A_i=s_i^*\Theta\). Then
\begin{align}
&A_j=\Ad_{g_{ij}}A_i-dg_{ij}\,g_{ij}^{-1},\qquad
F_{A_j}=\Ad_{g_{ij}}F_{A_i},\qquad
F_A=dA+\tfrac12[A\wedge A].
\label{eq:ordinary-connection}
\end{align}
Here \([A\wedge A](v,w)=2[A(v),A(w)]\). For a matrix group, \(F_A=dA+A\wedge A\). The Bianchi identity is \(dF_A+[A\wedge F_A]=0\). A horizontal lift \(\widetilde\gamma(t)=s(\gamma(t))U(t)\) of a path \(\gamma:[0,1]\to X\) satisfies
\begin{align}
&\dot U(t)=-A_{\gamma(t)}(\dot\gamma(t))U(t),\qquad U(0)=1.
\label{eq:path-ode}
\end{align}
For a nonmatrix Lie group the right-hand side is interpreted by right translation. A change of section gives \(U(1)\mapsto g(\gamma(1))U(1)g(\gamma(0))^{-1}\). The path-ordered exponential denotes this solution, not the exponential of an ordinary integral in the noncommuting case. Surface transport below is obtained by successive horizontal lifts and solutions of this ordinary differential equation \cite{LiHolonomy,SchreiberWaldorfHolonomy}.

We write \(\delta\) for the \v{C}ech or cellular coboundary, according to the complex in use. For additive \v{C}ech cochains our sign convention is
\begin{align}
&(\delta c)_{i_0\cdots i_{r+1}}
=\sum_{a=0}^{r+1}(-1)^a c_{i_0\cdots\widehat{i_a}\cdots i_{r+1}}.
\label{eq:cech-differential}
\end{align}
Restriction to the common intersection is implicit. Circle-valued coefficients use products and inverses. Good covers have contractible nonempty finite intersections. The differential-cocycle description combines this coboundary with the exterior derivative. In local-system cochains, coefficient values are transported to a common fibre before summing. These conventions apply respectively to the surface data of Section~\ref{sec:surface} and the finite gauge sums of Section~\ref{sec:finite}.

Descent assembles global objects from local objects, their identifications, and compatibility among those identifications. A groupoid records objects and invertible maps. A bigroupoid additionally records invertible maps between such maps. In gerbe descent that second level compares two choices of gluing. The required equations will be given explicitly, rather than inferred from the categorical terminology.

\section{Surface holonomy and magnetic charge}
\label{sec:surface}

\subsection{Local potentials and transition functions}

Let \(X\) be a smooth manifold with a good open cover \(\{U_i\}\). Identify \(\R/\mathbb Z\) with the multiplicative circle \(\U(1)\) by \(x\mapsto e^{2\pi i x}\). Abelian gerbe connection data can be written as real two-forms \(B_i\), real one-forms \(A_{ij}\), and functions \(g_{ijk}:U_i\cap U_j\cap U_k\to\U(1)\), satisfying
\begin{align}
&\begin{aligned}
&B_j-B_i=dA_{ij},\\
&A_{jk}-A_{ik}+A_{ij}=\frac{1}{2\pi i}g_{ijk}^{-1}dg_{ijk},\\
&g_{jkl}g_{ikl}^{-1}g_{ijl}g_{ijk}^{-1}=1.
\end{aligned}
\label{eq:abelian-cocycle}
\end{align}
The three-form \(H=dB_i\) is global and has integral periods. Changes of local trivialization are expressed by one-forms \(\Lambda_i\) and circle-valued functions \(f_{ij}\):
\begin{align}
&\begin{aligned}
&B_i'=B_i+d\Lambda_i,\\
&A_{ij}'=A_{ij}+\Lambda_j-\Lambda_i+
\frac{1}{2\pi i}f_{ij}^{-1}df_{ij},\\
&g_{ijk}'=g_{ijk}f_{jk}f_{ik}^{-1}f_{ij}.
\end{aligned}
\label{eq:abelian-coboundary}
\end{align}
These are the usual differential cocycle and coboundary relations for bundle gerbes \cite{Murray1996,MackaayPicken,AschieriJurco2004}. The triple-overlap function supplies the correction needed when the edge connection forms fail to be an ordinary cocycle. It is not an additional propagating field independent of these gluing equations.

To calculate a surface phase, subdivide a mapped surface into pieces lying in the \(U_i\). Integrate \(B_i\) over its faces and \(A_{ij}\) over their shared edges, and multiply the transition functions at vertices with their incidence signs. Changing a face label introduces \(\int dA_{ij}\), canceled by the edge terms. Changing an edge label introduces the logarithmic derivative in Eq.~\eqref{eq:abelian-cocycle}, canceled by vertex factors. The last relation cancels the remaining change of vertex labels. This is the direct holonomy construction in Ref.~\cite{MackaayPicken}. It uses integration, multiplication, and Stokes' theorem.

For a global two-form \(B\), the choice \(B_i=B\), \(A_{ij}=0\), \(g_{ijk}=1\) gives
\begin{align}
&W_B(f)=\exp\left(2\pi i\int_\Sigma f^*B\right)
\label{eq:global-surface}
\end{align}
for a smooth map \(f:\Sigma\to X\) from a closed oriented surface. On \(\R^3\), \(B=x\,dy\wedge dz\) gives \(W_B(\partial V)=e^{2\pi i\operatorname{Vol}(V)}\). On \(S^2\), a form \(B=\vartheta\omega\) with \(\int_{S^2}\omega=1\) gives \(W_B(S^2)=e^{2\pi i\vartheta}\), although \(H=0\). Both computations are ordinary integration. The second also shows which flat datum a proposed replacement must include.

\subsection{Differential characters and Wess--Zumino amplitudes}

The full set of closed-surface amplitudes has a description without a chosen cover. A degree-three differential character \cite{CheegerSimons1985} is a homomorphism
\begin{align}
&h:Z_2(X)\longrightarrow\U(1),\qquad
h(\partial C)=\exp\left(2\pi i\int_C H\right),
\label{eq:character}
\end{align}
where \(Z_2(X)\) is the group of smooth singular two-cycles, \(C\) is a smooth singular three-chain, and \(H\) is a closed three-form with integral periods. Isomorphism classes of abelian gerbes with connection correspond to these characters \cite{AschieriJurco2004}. Their group is denoted by \(\widehat H^3(X;\mathbb Z)\).

\begin{proposition}
\label{prop:surface-character}
Let \(h\in\widehat H^3(X;\mathbb Z)\) be the differential character associated with an abelian gerbe with connection, and let \(f:\Sigma\to X\) be a smooth map from a closed oriented surface. Its surface holonomy is
\begin{align}
&W_h(f)=h(f_*[\Sigma]).
\label{eq:worldsheet-character}
\end{align}
Here \([\Sigma]\) is any smooth fundamental cycle, and the value is independent of this choice. If \(f_*[\Sigma]=\partial C\), then Eq.~\eqref{eq:worldsheet-character} reduces to the Wess--Zumino filling formula in Eq.~\eqref{eq:character}.
\end{proposition}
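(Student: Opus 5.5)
We will prove the proposition by identifying the character of the gerbe with the cellular holonomy construction of Section~\ref{sec:surface}, and then checking the three claims in turn. First, we fix the correspondence between gerbes and characters. Given cocycle data \((B_i,A_{ij},g_{ijk})\) satisfying Eq.~\eqref{eq:abelian-cocycle}, we define \(h\) on a smooth singular two-cycle \(z\). We subdivide \(z\), barycentrically if necessary, so that every simplex lies in some \(U_i\). We then choose labels for the faces, edges and vertices and form the product of \(\exp(2\pi i\int B_i)\) over faces, \(\exp(2\pi i\int A_{ij})\) over edges, and the incidence-signed values \(g_{ijk}^{\pm1}\) at vertices. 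The label-change cancellations described before Eq.~\eqref{eq:global-surface} show that this product does not depend on the labels. Invariance under further subdivision follows because subdividing a simplex changes the chain only by a boundary of degenerate pieces, which contributes nothing. The coboundary relations~\eqref{eq:abelian-coboundary} then show that \(h\) depends only on the isomorphism class. We take as given the cited result of \cite{AschieriJurco2004} that this is the character attached to the gerbe.

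With this definition, Eq.~\eqref{eq:worldsheet-character} is essentially tautological. Choose a triangulation of \(\Sigma\) fine enough that each simplex maps into a single \(U_i\). The surface holonomy \(W_h(f)\) of Ref.~\cite{MackaayPicken} is the product above computed for \(f^*\) of the data on that triangulation. By naturality of integration, this equals the same product computed for the cycle \(f_*\) of the triangulation's fundamental cycle. Hence \(W_h(f)=h(f_*[\Sigma])\).

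Next we prove independence of the fundamental cycle. Two smooth fundamental cycles differ by \(\partial D\) for a smooth three-chain \(D\) in \(\Sigma\). Since \(h\) is a homomorphism, the two values differ by the factor
\begin{align}
h(f_*\partial D)=\exp\Big(2\pi i\int_D f^*H\Big).
\end{align}
Here we use Eq.~\eqref{eq:character} with \(C=f_*D\). The form \(f^*H\) is a three-form on a two-manifold and therefore vanishes, so the factor is \(1\). The Wess--Zumino statement is then immediate: if \(f_*[\Sigma]=\partial C\), then Eq.~\eqref{eq:character} gives \(W_h(f)=\exp(2\pi i\int_C H)\).

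The main obstacle is the first step, namely verifying that \(h\) satisfies the Stokes relation~\eqref{eq:character}. To prove it, subdivide \(C\) so that each three-simplex lies in some \(U_i\). Then apply Stokes' theorem on each three-simplex, using \(\int_\tau H=\int_{\partial\tau}B_i\). Summing over simplices, the interior faces shared by two simplices with different labels leave differences \(B_j-B_i=dA_{ij}\). These reduce to edge terms, which in turn reduce to vertex terms through the cocycle relations, exactly as in the label-change argument. The delicate part is the sign and incidence bookkeeping, which requires the ordered-simplex conventions of Eq.~\eqref{eq:cech-differential}. Since this is the standard Cheeger--Simons/Gaw\k{e}dzki argument, we will cite \cite{MackaayPicken,AschieriJurco2004} for it rather than reproduce every sign.
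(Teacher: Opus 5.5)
Your proposal is correct and follows essentially the paper's route. You identify the character with the Mackaay--Picken surface holonomy of the local data and obtain its boundary value from Stokes' theorem and the cocycle relations in Eq.~\eqref{eq:abelian-cocycle}; the one difference is that you check independence of the fundamental cycle via \(f^*H=0\) on the two-dimensional \(\Sigma\), which is what the statement literally asserts, whereas the paper checks independence of the filling \(C\) via integrality of the periods of \(H\), and in your argument that filling independence already follows from \(h\) being a well-defined function on cycles satisfying Eq.~\eqref{eq:character}.
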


\begin{proof}
The character associated with the local data is their surface holonomy. Its boundary value follows from Stokes' theorem and Eq.~\eqref{eq:abelian-cocycle}. The character correspondence identifies background classes with the same function on all two-cycles. If two fillings are used, their difference is a three-cycle whose integral of \(H\) is an integer. The exponentiated filling integral is therefore unchanged.
\end{proof}

The exact sequence
\begin{align}
&0\longrightarrow H^2(X;\R/\mathbb Z)
\longrightarrow\widehat H^3(X;\mathbb Z)
\xrightarrow{\mathrm{curv}}\Omega^3_{\mathbb Z}(X)
\longrightarrow0
\label{eq:character-exact}
\end{align}
expresses the difference between specifying the character and specifying only its curvature. Here \(\Omega^3_{\mathbb Z}(X)\) denotes closed three-forms with integral periods. Flat surface phases, including those associated with torsion characteristic classes, occur in the left-hand term. They are already included in the alternative description. For any fixed collection of worldsheet maps, replacing the gerbe by its character preserves every background weight separately, rather than only a summed observable.

On a simply connected manifold one can instead specify smooth holonomy on based spheres modulo thin homotopy. The group \(\pi_2^2(X)\) consists of based spheres modulo thin homotopy, whose differential has rank at most two. Ref.~\cite{MackaayPicken} reconstructs the gerbe connection class from a smooth homomorphism \(\pi_2^2(X)\to\U(1)\): suitable double cones give loop holonomies on double intersections, ordinary line-bundle reconstruction gives their connections, and varying spheres recovers the local forms and transition functions. The ordinary group \(\pi_2(X)\) would not suffice. The volume example above varies under ordinary null-homotopies. The reconstruction identifies precisely the additional smooth holonomy data needed by this presentation.

\subsection{Loop bundles and fusion}

Surface sewing can also be performed with an ordinary principal bundle on the loop space \(LX=C^\infty(S^1,X)\). Take smooth paths that are constant near their endpoints, so that concatenation is smooth. On the space of triples \((\gamma_1,\gamma_2,\gamma_3)\) with common endpoints, define \(e_{ij}(\gamma_1,\gamma_2,\gamma_3)=\overline\gamma_j\star\gamma_i\in LX\) by following \(\gamma_i\) and returning along \(\gamma_j\). A fusion product on a principal \(\U(1)\)-bundle \(P\to LX\) is an isomorphism
\begin{align}
&\lambda:e_{12}^*P\otimes e_{23}^*P\longrightarrow e_{13}^*P
\label{eq:fusion}
\end{align}
that is associative for four paths. The connection preserves this product and obeys the symmetrizing and superficial conditions: the specified half-rotation respects fusion, rank-one tori have trivial holonomy, and rank-two homotopies preserve it. For connected \(X\), these bundles with connection are equivalent to gerbes with connection after identifying two-isomorphic gerbe morphisms \cite{Waldorf2016}.

We can see how Eq.~\eqref{eq:fusion} enters a surface amplitude by cutting the surface into cylinders, pairs of pants, and discs. A cylinder is a path in \(LX\), so its amplitude is ordinary parallel transport in \(P\). A pair of pants combines two incoming fibres into the outgoing fibre using \(\lambda\). Cutting it into cylinders gives its amplitude by parallel transport and this single bundle map. Discs use the canonical section over constant loops. Sewing these pieces gives general surface amplitudes. Conversely, pulling \(P\) back to pairs of paths starting at a fixed base point and using \(\lambda\) reconstructs the gerbe.

Thus transgression gives a complete alternative with specified operations. Its base is infinite-dimensional, and its fusion product is additional to the bare bundle. Whether it is a convenient simplification depends on the calculation: a single closed-surface phase may be easier to obtain from Eq.~\eqref{eq:worldsheet-character}, whereas repeated sewing can be expressed through Eq.~\eqref{eq:fusion}. No additional physical law is needed for this equivalence.

\subsection{Nonabelian transport by ordinary differential equations}

A concrete nonabelian transport construction begins with a principal \(G\)-bundle \(P\to X\), a connection \(A\), and a homomorphism \(t:H\to G\) with a \(G\)-action \(\alpha\) on \(H\) satisfying \(t(\alpha_g h)=gt(h)g^{-1}\) and \(\alpha_{t(h)}h'=hh'h^{-1}\). This is a crossed module. Write \(\fh=\operatorname{Lie}(H)\), and let \(B\in\Omega^2(P,\fh)\) be horizontal and \(\alpha\)-equivariant, with \(t_*B=F_A\). Lift the bottom edge of a square \(\Sigma(s,t)\) horizontally, then lift its \(t\)-curves horizontally from that edge. For the resulting lift \(\widetilde\Sigma\), solve
\begin{align}
&\frac{dh}{ds}=(dR_{h(s)})_e
\int_0^1 B_{\widetilde\Sigma(s,t)}
(\partial_s\widetilde\Sigma,\partial_t\widetilde\Sigma)\,dt,
\qquad h(0)=1.
\label{eq:surface-ode}
\end{align}
Here \(R_h\) is right multiplication in \(H\). The surface transport is \(h(1)\). Ref.~\cite{LiHolonomy} derives the transformation under a change of initial lift and the composition laws using ordinary differential equations. Its nonabelian Stokes calculation reduces a three-dimensional variation to a two-dimensional curvature calculation under the stated thin-face assumptions.

This gives a direct procedure whenever the global \(P,A,B\) above are available. For general descent, local forms, overlap functions and one-form gauge transformations are required. Ref.~\cite{SchreiberWaldorfHolonomy} gives their equivalence with complete smooth transport under its typical-fibre hypotheses. The adjusted version of these equations is written explicitly in Section~\ref{sec:data}. Ordinary differential equations therefore compute the local transport. These local constructions do not imply that the underlying bundle admits a global trivialization.

\subsection{Monopole charge from ordinary bundle data}
\label{sec:monopoles}

An ordinary Dirac monopole has a global description by a line bundle on \(\R^3\setminus\{0\}\). Its curvature integral over a surrounding sphere gives the quantized charge. A higher surface-transport construction can also start from this ordinary information. Ref.~\cite{Parzygnat2015} begins with a principal connected Lie group bundle \(P\to X\), a connection, and a subgroup \(N\leq\pi_1(G)\). The corresponding covering homomorphism defines a crossed module,
\begin{align}
&\tau:\widetilde G_N\longrightarrow G,\qquad
\ker\tau=\pi_1(G)/N.
\label{eq:monopole-cover}
\end{align}
Since \(\tau_*\) is an isomorphism of Lie algebras, the local fake-flatness equation fixes the two-form:
\begin{align}
&B=\tau_*^{-1}F_A.
\label{eq:monopole-curving}
\end{align}
There is no independent local two-form to determine in this construction.

We can compute its sphere flux without constructing the associated 2-bundle. Choose a family of based loops sweeping the sphere. Their ordinary holonomies form a loop in \(G\), whose class modulo \(N\) is the higher flux in Ref.~\cite{Parzygnat2015}. Equivalently, trivialize \(P\) on the two hemispheres and lift the equatorial transition loop to \(\widetilde G_N\), starting at the identity. With orientation matched to the sweep, the endpoint is the same element of \(\ker\tau\). To see the equality, choose the hemisphere frames by parallel transport from their poles and identify the initial frames along one meridian. The equatorial transition is then the holonomy along a varying meridian followed by the fixed return meridian. This is exactly the loop of holonomies used in the surface construction.

Changing either hemisphere frame multiplies the transition loop by a loop extending over a disk, so its class is unchanged. The covering kernel is central and abelian, which also explains the gauge invariance and the abelian group of charge classes. These conclusions concern the ordinary bundle on the surrounding sphere. No Higgs reduction has been specified. For \(G=\U(1)\) and its universal cover, the kernel is the discrete group \(\mathbb Z\), and the construction records the original degree-two Chern class. It is not a circle-band gerbe with an independent degree-three characteristic class.

\subsection{Magnetic translations and the Jacobi obstruction}

Smooth magnetic charge gives a different problem. Ref.~\cite{BunkMuellerSzabo2019} starts on \(\R^d\) with a real two-form \(\rho\) and \(H=d\rho\). The proposed position and momentum relations are
\begin{align}
&[x^i,p_j]=i\hbar\delta^i_j,\qquad
[p_i,p_j]=-i\hbar\rho_{ij}(x).
\label{eq:magnetic-commutators}
\end{align}
Using \([f(x),p_j]=i\hbar\partial_jf\), their cyclic left-nested commutator is
\begin{align}
&[[p_1,p_2],p_3]+[[p_2,p_3],p_1]+[[p_3,p_1],p_2]
=\hbar^2H_{123}(x).
\label{eq:magnetic-jacobi}
\end{align}
Ordinary associative operators on a common invariant domain satisfy the Jacobi identity, so they cannot realize these relations when the right-hand side is nonzero. Ref.~\cite{BunkMuellerSzabo2019} instead proposes a categorified state space built from a topologically trivial gerbe with curving \(\rho\).

The numerical associator can nevertheless be computed directly. Using the ordered triangles and tetrahedra defined in Ref.~\cite{BunkMuellerSzabo2019}, the translation phases are
\begin{align}
\omega_{v,w}(x)=\exp\left(-\frac{i}{\hbar}\int_{\Delta^2(x;w,v)}\rho\right), \qquad
\omega_{u,v,w}(x)=\exp\left(\frac{i}{\hbar}\int_{\Delta^3(x;w,v,u)}H\right).
\end{align}
The second is the coboundary of the triangle phases, evaluated by Stokes' theorem. Thus its value and the Jacobi obstruction require only the forms and their integrals. This computation does not supply an ordinary Hilbert-space quantization of the same relations. Equation~\eqref{eq:magnetic-jacobi} forbids that inference. Ref.~\cite{BunkMuellerSzabo2019} describes the proposed quantization as provisional and leaves the magnetic Weyl correspondence and Hamiltonian dynamics for further work. An explicitly computable phase and an established quantum theory are different achievements.

An isolated Dirac charge avoids this obstruction by removing its singular position and using an ordinary line bundle on the complement. A self-dual string has a different transverse space, \(\R^4\setminus\{0\}\), and a three-form flux through \(S^3\). Its abelian field can be described by the differential characters above \cite{RistSaemannWolf2026}. The synthetic monopole of Section~\ref{sec:bands} is instead a degeneracy in a Hamiltonian's parameter space. These examples cannot be interchanged when assessing what the higher description accomplishes.

\section{Wess--Zumino--Witten theory and string backgrounds}
\label{sec:string}

\subsection{Multiplication phases and Chern--Simons levels}

The Wess--Zumino--Witten model is a two-dimensional sigma model whose target is a Lie group. Its quantized Wess--Zumino term supports the current algebra that relates the model to three-dimensional Chern--Simons theory. For a compact, connected, simple target \(G\), take the invariant trace with the level normalization of Ref.~\cite{GawedzkiWaldorf2009}, and write
\begin{align}
&H_k=\frac{k}{24\pi^2}\tr(g^{-1}dg)^3,\qquad
\omega_k=\frac{k}{8\pi^2}\tr\bigl(g_1^{-1}dg_1\wedge g_2dg_2^{-1}\bigr).
\label{eq:wzw-forms}
\end{align}
Here \(H_k\) is a three-form on \(G\), \(\omega_k\) is a two-form on \(G\times G\), and integrality of \(H_k\) restricts the level \(k\). Choose a differential character with curvature \(H_k\), and let \(W_k(\varphi)\) be its surface holonomy on \(\varphi:\Sigma\to G\). For pointwise multiplication of two maps, the Polyakov--Wiegmann formula reads
\begin{align}
&W_k(\varphi_1\varphi_2)
=c_{\varphi_1,\varphi_2}W_k(\varphi_1)W_k(\varphi_2)
\exp\left(2\pi i\int_\Sigma(\varphi_1,\varphi_2)^*\omega_k\right).
\label{eq:polyakov-wiegmann}
\end{align}
The phase \(c\) is one for simply connected \(G\), but can be nontrivial otherwise. It measures a compatibility condition not determined by the curvature alone.

For \(G=SO(3)=\SU(2)/\mathbb Z_2\), this normalization allows a Wess--Zumino term at even \(k\). On a torus, label the winding classes of the two maps by \((m,n),(m',n')\in\mathbb Z_2^2\). The explicit phase is
\begin{align}
&c_{(m,n),(m',n')}=(-1)^{(k/2)(m'n-mn')}.
\label{eq:so3-phase}
\end{align}
Taking \((m,n)=(0,1)\) and \((m',n')=(1,0)\) gives \((-1)^{k/2}\). Hence the phase is identically one precisely when \(4\mid k\). Ref.~\cite{GawedzkiWaldorf2009} proves that this is the obstruction to the multiplicative structure used in its ordinary bosonic Chern--Simons construction. The stronger level restriction is therefore visible in the winding-number calculation. This statement does not exclude spin-dependent Chern--Simons refinements.

The condition here is that the specified phase be identically one, not merely that its class as an abstract group cocycle vanish. For \(k=2\bmod4\), the displayed sign is the coboundary of \((-1)^{mn}\) on \(\mathbb Z_2^2\), yet is not identically one. An arbitrary rephasing of that group cocycle does not supply the missing multiplicative structure for the fixed differential character. Ref.~\cite{GawedzkiWaldorf2009} develops this comparison through local forms, transition functions, multiplication cochains, and their associativity equations. These provide a complete description without the geometric gerbe model, while keeping the extra multiplication data. The phase calculation predates the multiplicative-gerbe interpretation \cite{FelderGawedzkiKupiainen1988,GawedzkiWaldorf2009}.

\subsection{Discrete torsion and projection operators}

Discrete torsion changes an orbifold spectrum through phases in the projection operators. Let \(\Gamma=\mathbb Z_2\times\mathbb Z_2\), writing its elements as \((a,b)\). The ordinary group cocycle
\begin{align}
&\omega((a,b),(a',b'))=(-1)^{ab'},\qquad
\epsilon(g,h)=\omega(g,h)\omega(h,g)^{-1}
\label{eq:discrete-torsion}
\end{align}
changes the projection in the \(g\)-twisted Hilbert space to
\begin{align}
&P_g^\epsilon=\frac1{|\Gamma|}\sum_{h\in\Gamma}\epsilon(g,h)U_h,
\label{eq:twisted-projector}
\end{align}
where \(U_h\) is the action of \(h\) in that sector. The cocycle identity ensures the consistency of these phases and their factorization.

For the \(T^6/\Gamma\) example of Ref.~\cite{VafaWitten1995}, Eq.~\eqref{eq:twisted-projector} transfers the forty-eight twisted-sector contributions from \(H^{1,1}\) to \(H^{2,1}\), giving
\begin{align}
&(h^{1,1},h^{2,1})=(51,3)\quad\longleftrightarrow\quad(3,51).
\label{eq:orbifold-spectrum}
\end{align}
Here \(h^{p,q}=\dim H^{p,q}\) are the Hodge numbers. The total displayed count is unchanged, but its grading changes. The two theories are mirror-related. This is the original calculation, expressed through phases and projections. A gerbe interpretation can describe the same twisting, but it supplies no missing step in this spectrum calculation.

Global compatibility also constrains which local resolutions can arise from a specified orbifold construction. For the \(T^7/(\mathbb Z_2)^3\) model of Ref.~\cite{BoyleSmithTachikawa2026}, the local phases must be restrictions of one class in the equivariant cohomology group \(H^2_{(\mathbb Z_2)^3}(T^7;\U(1))\). With complete desingularization, the eight remaining signs can be written as \((-1)^{\epsilon(a_3,a_5,a_7)}\), where \(a_3,a_5,a_7\in\mathbb F_2\) label the singularities and
\begin{align}
&\epsilon(a_3,a_5,a_7)=u+a_3v+a_5w,\qquad u,v,w\in\mathbb F_2.
\label{eq:g2-signs}
\end{align}
There are therefore zero, four, or eight zero values of \(\epsilon\). The resulting twisted-sector projections give only the Betti-number pairs \((b_2,b_3)=(8,47),(12,43),(16,39)\), among the nine geometric possibilities in the Joyce family. Ordinary equivariant cohomology gives the compatibility conditions and the projection calculation. This restriction concerns the stated orbifold construction. It does not exclude the remaining smooth \(G_2\) manifolds or their other possible worldsheet descriptions.

\subsection{A compact boson and decomposition}

A flat equivariant field can change a compact-boson spectrum through winding-dependent phases. In the cyclic orbifold of order \(N\) in Ref.~\cite{ChengMelnikovMinasian2022}, the generator acts on a state of winding number \(w\) by \(e^{2\pi i w/N}\). Projection keeps windings divisible by \(N\), while the sector indexed by \(b\in\mathbb Z/N\mathbb Z\) shifts the momentum number from \(n\) to \(n+b/N\). Here the unshifted momentum and winding numbers \(n,w\) are integers. If \(R\) is the original radius, these give momentum \((Nn+b)/(NR)\) and winding length \(NR(w/N)\). We thus recover the momentum and winding lattice of an ordinary circle of radius \(NR\) directly from the projection and shifted sectors.

More generally, Ref.~\cite{HellermanEtAl2006} proposes decomposition of noneffective finite-group orbifolds, in which a subgroup acts trivially on the target, into components indexed by representation orbits. The component twists and genus-dependent normalization factors are part of the proposed replacement. The general conformal-field-theory decomposition is stated as a conjecture. It is distinct from the explicit compact-boson calculation above.

\subsection{Open worldsheets and the boundary gauge field}

An open worldsheet also couples to a boundary gauge field on the D-brane. Its fermion Pfaffian and bulk surface factor need not separately be numbers. Their boundary transformation laws must cancel. Ref.~\cite{FreedWitten1999} expresses the cancellation through local connection forms and transition functions on the brane. The familiar characteristic-class condition involves the integral third Stiefel--Whitney class \(W_3\) of the brane normal bundle and the restricted integral bulk twist. Here \(W_3=\beta(w_2)\), where \(w_2\) is the second Stiefel--Whitney class and \(\beta\) is the integral Bockstein. The differential cancellation data, rather than the characteristic class alone, determine the amplitude.

The elementary example is a brane \(Q=\mathbb{CP}^2\) with topologically trivial bulk \(B\)-field. If \(x\) generates \(H^2(Q;\mathbb Z)\), then \(w_2(TQ)=x\bmod2\). A \(\mathrm{Spin}^c\) determinant line \(L\) satisfies \(c_1(L)=(2m+1)x\). The real boundary gauge potential is the corresponding half connection. Writing \(F_A\) for its real curvature,
\begin{align}
&\frac1{2\pi}\int_{\mathbb{CP}^1}F_A=m+\frac12.
\label{eq:freed-witten-example}
\end{align}
The half-integral flux follows from ordinary \(\mathrm{Spin}^c\) geometry and the determinant line. It does not require a gerbe construction. For nontrivial bulk twisting, the same calculation is made with twisted boundary transition functions and their compatible connection forms, following the cancellation formulas in Ref.~\cite{FreedWitten1999}. This example uses that concrete construction, without asserting a converse to every general anomaly-cancellation condition discussed there.

The boundary restrictions can also be computed in a Wess--Zumino--Witten model. For the level-two theory with target \(\mathrm{Spin}(8)/(\mathbb Z_2\times\mathbb Z_2)\), Ref.~\cite{Gawedzki2004} considers two gerbes with the same curvature. On a particular symmetry-preserving conjugacy-class brane, one choice admits four rank-one boundary conditions. For the other, the Chan--Paton descent matrices satisfy
\begin{align}
&W_1^2=W_2^2=I,\qquad W_1W_2=-W_2W_1.
\label{eq:wzw-projective}
\end{align}
Rank one is impossible, and Pauli matrices give an irreducible rank-two solution. Every permitted rank is even. The flat choice therefore has a boundary consequence that ordinary projective representation theory computes. The noncommuting matrices are boundary gauge data for an abelian bulk gerbe.

\subsection{Green--Schwarz terms as differential cochains}

The Green--Schwarz mechanism cancels gauge and gravitational anomalies by coupling tensor fields to characteristic forms. In six dimensions its local input is a factorized anomaly polynomial \(I_8=\tfrac12Y\wedge Y\), where \(Y\) is a four-form valued in the real span \(\Lambda_{\R}\) of the string-charge lattice \(\Lambda\), and the wedge product includes its integral bilinear pairing. Locally the tensor field strength obeys \(dH=Y\), and the action contains a term proportional to \(\tfrac12\int B\wedge Y\). For nontrivial bundles these forms require global completion.

A direct completion on spacetime \(M\) uses differential cochains, as in Ref.~\cite{MonnierMoore2019}. A degree-\(p\) cochain is a triple
\begin{align}
&\check c=(a,h,\omega),\qquad
d\check c=(\delta a,\omega-a-\delta h,d\omega),
\label{eq:differential-cochain}
\end{align}
where \(a\in C^p(M;\Lambda_{\R})\), \(h\in C^{p-1}(M;\Lambda_{\R})\), and \(\omega\in\Omega^p(M;\Lambda_{\R})\). A form in the cochain component means integration over smooth singular chains. Integral quantization, or a specified shifted quantization, is imposed on \(a\). The differential squares to zero. For cocycles, evaluation of \(h\) on cycles modulo \(\Lambda\) gives the holonomy. The cochain \(a\) includes torsion flux information absent from \(\omega\).

Cup products of these triples combine the ordinary cup product, the wedge product, and a cochain homotopy between them. The half-level Green--Schwarz term further requires a quadratic refinement and the appropriate Wu-class shift. Both enter the differential-cochain construction of Ref.~\cite{MonnierMoore2019}. Replacing them by the uncorrected expression \(\tfrac12 B\wedge Y\) would not preserve large-gauge invariance. The anomaly calculation requires the gravitational anomaly coefficient \(a_\mathrm{grav}\in\Lambda\) to be characteristic:
\begin{align}
&(a_\mathrm{grav},x)\equiv(x,x)\pmod2
\qquad\text{for every }x\in\Lambda.
\label{eq:characteristic-vector}
\end{align}
Here \((\cdot,\cdot)\) is the integral lattice pairing. Under the assumptions on the field model and string-defect anomaly inflow in Ref.~\cite{MonnierMoore2019}, global cancellation also involves a spin-bordism invariant.

This example reaches beyond flat spacetime and curvature-only computations. Its global calculation is already formulated through differential cochains, characteristic classes, quadratic refinements, and bordism. A gerbe can represent part of this data, but that representation is not a replacement for, or an additional contribution to, the actual anomaly calculation. The alternative is more involved than elementary differential forms because the physical problem includes quantization and global anomalies.

\subsection{Anomaly descent and 2-group symmetry in QED}

The same descent mechanism appears in ordinary four-dimensional quantum electrodynamics (QED). Consider \(N_f\geq2\) massless Dirac fermions of electric charge \(q\), with no dynamical magnetic monopoles, and use the simply connected flavor cover \(\SU(N_f)_L\times\SU(N_f)_R\). Write \(A_L,A_R\) for its background connections, \(\lambda_L,\lambda_R\) for their infinitesimal gauge parameters, and \(B\) for the two-form background coupled to the magnetic current. In the antihermitian fundamental-trace convention of Ref.~\cite{CordovaDumitrescuIntriligator2019}, the background transformation is
\begin{align}
&\delta B=d\Lambda+\frac{q}{4\pi}
\bigl[\tr(\lambda_L\,dA_L)-\tr(\lambda_R\,dA_R)\bigr],
\label{eq:qed-background}
\end{align}
where \(\Lambda\) is a one-form gauge parameter. This shift cancels the mixed operator-valued anomaly after the electric symmetry is gauged. Differentiating the background identity gives a mixed flavor-current Ward identity with a contact contribution proportional to the magnetic-current two-point function and to \(q\). Ordinary anomaly descent and the magnetic background coupling already compute this response. The 2-group description organizes their coupled transformation laws. Treating \(B\) as an independent abelian background would discard the coupling in Eq.~\eqref{eq:qed-background}.

\section{Band topology and tensor Berry connections}
\label{sec:bands}

\subsection{The Dixmier--Douady number equals the winding number}

In band theory, the immediate tasks are to classify band structures with specified spectral gaps and to compute measurable responses. Introducing a tensor Berry connection can give a geometric interpretation of the same band invariant. For the three-band family in Ref.~\cite{PalumboGoldman2019}, the comparison is explicit:
\begin{align}
&H_{\mathrm{band}}(q)=
\begin{pmatrix}
0&0&d_1-id_2\\
0&0&d_3-id_4\\
d_1+id_2&d_3+id_4&0
\end{pmatrix},
\label{eq:band-model}
\end{align}
where \(M\in\mathbb R\) is a control parameter, \(q=(q_x,q_y,q_z)\in\mathbb T^3\) is crystal momentum in the Brillouin torus, \(d_1=\sin q_x\), \(d_2=\sin q_y\), \(d_3=\sin q_z\), and \(d_4=M-\cos q_x-\cos q_y-\cos q_z\). Assume \(|d(q)|>0\) everywhere. The negative-energy eigenstate is
\begin{align}
&|u_-\rangle=2^{-1/2}(u_1,u_2,-1)^{\mathsf T},
\quad u_1=(d_1-id_2)/|d|,\quad u_2=(d_3-id_4)/|d|.
\end{align}
The normalized vector \(n=d/|d|\), with \(d=(d_1,d_2,d_3,d_4)\), defines a map \(\mathbb T^3\to S^3\). Write \(\partial_x=\partial/\partial q_x\), and similarly for \(y,z\), and use \(\epsilon^{1234}=1\). Its degree is
\begin{align}
&w^3=\frac{1}{2\pi^2}\int_{\mathbb T^3}
\epsilon^{abcd}n_a\partial_xn_b\partial_yn_c\partial_zn_d\,d^3q.
\label{eq:band-winding}
\end{align}
Ref.~\cite{PalumboGoldman2019} constructs a local two-form tensor connection \(B\) from these eigenstate components. Its curvature \(\mathcal H=dB\) gives the Dixmier--Douady number
\begin{align}
&\mathcal{DD}=\frac{1}{2\pi^2}\int_{\mathbb T^3}\mathcal H_{xyz}\,d^3q=w^3.
\label{eq:band-dd}
\end{align}
Consequently, the Dixmier--Douady number can be computed directly from the original Hamiltonian by Eq.~\eqref{eq:band-winding}. It does not distinguish additional members of this family beyond the winding number.

\subsection{An experimental charge from ordinary band geometry}

This charge also has a direct experimental measurement. The tensor-monopole model proposed in Ref.~\cite{PalumboGoldman2018} was implemented using the three spin states of a nitrogen--vacancy centre in diamond \cite{ChenEtAl2022}:
\begin{align}
&\widehat H_{4D}=
\begin{pmatrix}
0&q_x-iq_y&0\\
q_x+iq_y&0&q_z+iq_w\\
0&q_z-iq_w&0
\end{pmatrix}.
\label{eq:nv-model}
\end{align}
At fixed \(H_0>0\), write \(q_x+iq_y=H_0\cos\alpha\,e^{i\beta}\) and \(q_z+iq_w=H_0\sin\alpha\,e^{i\phi}\), where \(0\leq\alpha\leq\pi/2\) and \(\beta,\phi\) are \(2\pi\)-periodic angles. This parametrizes a three-sphere surrounding the degeneracy at the origin.

For the negative-energy eigenstate \(|u_-\rangle\), define the ordinary Berry connection \(\mathcal A_\mu=i\langle u_-|\partial_\mu u_-\rangle\) \cite{Berry1984}, curvature \(\mathcal F_{\mu\nu}=\partial_\mu\mathcal A_\nu-\partial_\nu\mathcal A_\mu\), and Fubini--Study metric
\begin{align}
&g_{\mu\nu}=\operatorname{Re}\langle\partial_\mu u_-|
(1-|u_-\rangle\langle u_-|)|\partial_\nu u_-\rangle.
\end{align}
With the orientation of Ref.~\cite{ChenEtAl2022}, the curvature and metric of this Hamiltonian satisfy
\begin{align}
&\mathcal H_{\alpha\beta\phi}
=-\tfrac12(\mathcal F_{\alpha\beta}+\mathcal F_{\phi\alpha})
=4\sqrt{\det(g_{\mu\nu})},
\qquad \mu,\nu\in\{\alpha,\beta,\phi\},
\label{eq:nv-measurement}
\end{align}
so spectroscopic measurements of the curvature or metric give the integrated charge, equal to one in the ideal model. Reconstructing a gerbe's transition data is unnecessary for this experimental task. The experiment measures the geometry of an engineered finite-level Hamiltonian. Its relation to a tensor field is a representation of that geometry. The identities in Eq.~\eqref{eq:nv-measurement} are not asserted for arbitrary Hamiltonians. Under the chiral-symmetry-breaking perturbation studied in the supplementary material of Ref.~\cite{ChenEtAl2022}, the analogous quantity is not a topological invariant.

\subsection{A boundary Berry phase with zero curvature}

The preceding measurement uses curvature, but a flat Berry connection can still have nontrivial holonomy. Ref.~\cite{OhyamaTerashimaShiozaki2023} constructs a qutrit-chain family parametrized by \(L(3,1)\times S^1\), where the lens space \(L(3,1)=S^3/\mathbb Z_3\) is formed by scalar multiplication on \(S^3\subset\mathbb C^2\). Advancing around the circle returns the bulk Hamiltonian to its initial value while changing the boundary discrete Berry phase.

In the finite boundary models of Ref.~\cite{OhyamaTerashimaShiozaki2023}, the local Berry connections vanish. For a generator of \(\pi_1(L(3,1))=\mathbb Z_3\), the initial and final endpoint transition phases are \(1\) and \(\omega^2\), respectively, where \(\omega=e^{2\pi i/3}\). The ratio of the initial phase to the final phase is therefore \(1/\omega^2=\omega\). The semi-infinite calculation gives the same ratio after canceling the common bulk contribution. It does not require convergence of either separate semi-infinite phase. This obstructs a boundary Hamiltonian that is globally defined over the parameter space, periodic in the cycle, and uniquely gapped everywhere. Ordinary Berry holonomy, including its endpoint transition factor, reproduces the response even though the local curvature vanishes.

\subsection{Vanishing of higher wedge powers}

The six- and nine-dimensional extensions proposed in Ref.~\cite{PalumboGoldman2019} use wedge powers of a scalar abelian three-form curvature \(\mathcal H\). Antisymmetry forces the square of this form to vanish:
\begin{align}
&\mathcal H\wedge\mathcal H
=(-1)^{3\cdot3}\mathcal H\wedge\mathcal H
=-\mathcal H\wedge\mathcal H,
\qquad
\mathcal H\wedge\mathcal H=0.
\label{eq:odd-square}
\end{align}
Over \(\R\) or \(\mathbb C\), the proposed \(\int\mathcal H^{\wedge2}\) and \(\int\mathcal H^{\wedge3}\) therefore vanish identically. They cannot supply the proposed nonzero higher-dimensional invariants as written. This does not affect Eq.~\eqref{eq:band-dd}, which integrates one three-form. A product involving different fields or a different cohomology operation would be a different construction, not the displayed powers of one scalar curvature.

\subsection{Reconstruction from line integrals}

The reconstruction in Ref.~\cite{JankowskiSlagerPalumbo2025} defines a scalar by integrating a Berry-connection component along the shortest path from a base point and identifies its ordinary coordinate derivative with that component. Differentiation yields an additional curvature term. For a smooth one-form \(A=A_i\,dk^i\) on a star-shaped coordinate patch, set \(x=k-k_0\) and
\begin{align}
&\phi(k)=\int_0^1x^jA_j(k_0+tx)\,dt.
\label{eq:radial-integral}
\end{align}

\begin{proposition}
\label{prop:radial-integral}
Let \(U\subset\mathbb R^n\) be star-shaped about \(k_0\), let \(A=A_i\,dk^i\in\Omega^1(U)\), set \(x=k-k_0\), and define \(\phi\) by Eq.~\eqref{eq:radial-integral}. Then, for every \(k\in U\),
\begin{align}
&\partial_i\phi(k)=A_i(k)
+\int_0^1t x^j(dA)_{ij}(k_0+tx)\,dt.
\label{eq:radial-correction}
\end{align}
Consequently, \(d\phi=A\) on \(U\) only if \(dA=0\) on \(U\).
\end{proposition}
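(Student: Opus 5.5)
We differentiate under the integral sign in Eq.~\eqref{eq:radial-integral}; this is the classical Poincar\'e-lemma computation. Since \(U\) is star-shaped about \(k_0\), the segment \(k_0+tx\), \(t\in[0,1]\), lies in \(U\). Because \(A\) is smooth, the integrand \(x^jA_j(k_0+tx)\) is smooth in \((t,k)\) on \([0,1]\times U\). Hence \(\phi\) is smooth and we may differentiate inside the integral. With \(\partial_i x^j=\delta^j_i\) and the chain rule \(\partial_i[A_j(k_0+tx)]=t(\partial_iA_j)(k_0+tx)\), we obtain
\begin{align}
&\partial_i\phi(k)=\int_0^1\Bigl(A_i(k_0+tx)+t x^j(\partial_iA_j)(k_0+tx)\Bigr)dt.
\end{align}

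Next we split \(\partial_iA_j=(\partial_jA_i)+(\partial_iA_j-\partial_jA_i)\). The antisymmetric part is the component \((dA)_{ij}\), in the convention \(dA=\tfrac12(dA)_{ij}dk^i\wedge dk^j\) with \((dA)_{ij}=\partial_iA_j-\partial_jA_i\). This part already gives the correction term in Eq.~\eqref{eq:radial-correction}. For the remaining terms we use the total derivative
\begin{align}
&\frac{d}{dt}\bigl[tA_i(k_0+tx)\bigr]=A_i(k_0+tx)+t x^j(\partial_jA_i)(k_0+tx).
\end{align}
Integrating this identity from \(0\) to \(1\) gives \(A_i(k)\), which proves Eq.~\eqref{eq:radial-correction}. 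The only step requiring care is this fundamental-theorem identity together with the index and sign convention for \((dA)_{ij}\). With the opposite ordering convention the correction would change sign, so we fix the convention explicitly as above.

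For the consequence, suppose \(d\phi=A\) on \(U\). Then \(dA=d d\phi=0\), because mixed partial derivatives of the smooth function \(\phi\) commute. This already gives the stated implication without needing the formula. We also note the converse direction: by Eq.~\eqref{eq:radial-correction}, \(dA=0\) forces \(\partial_i\phi=A_i\). Hence the omitted curvature term is exactly the obstruction whenever the Berry curvature is nonzero.
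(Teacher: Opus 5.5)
Your proposal is correct and follows the paper's proof: you differentiate under the integral, rewrite \(A_i(k)\) by integrating \(\frac{d}{dt}[tA_i(k_0+tx)]\) over \([0,1]\), and subtract to get the antisymmetric part \((dA)_{ij}\). The consequence then follows, as in the paper, by applying \(d\) to \(d\phi=A\).
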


\begin{proof}
Differentiating under the integral gives \(\int_0^1[A_i+t x^j\partial_iA_j]dt\). Integrating \(\frac{d}{dt}[tA_i(k_0+tx)]\) gives \(A_i(k)=\int_0^1[A_i+t x^j\partial_jA_i]dt\). Subtracting proves Eq.~\eqref{eq:radial-correction}. Applying \(d\) to \(d\phi=A\) proves the last assertion.
\end{proof}

For example, with \(k_0=0\), \(A=x\,dy\) gives \(\phi=xy/2\), not a scalar with \(d\phi=A\). A choice of singularities or multivalued branches cannot alter this local calculation on a smooth patch. For the proposed construction \(B=\phi\,dA\), the actual curvature is \(dB=d\phi\wedge dA\). Its identification with \(A\wedge dA\) does not follow from the stated line integral.

The latter equality fails even for smooth, gapped Bloch data. On a small ball about the origin with \(|x|<1/2\), take
\begin{align}
&u(x,y,z)=\begin{pmatrix}\sqrt{1/2-x}\\\sqrt{1/2+x}\,e^{i(y+z^2)}\end{pmatrix},
\qquad H_{\mathrm{loc}}=1-2uu^\dagger.
\label{eq:bloch-counterexample}
\end{align}
This normalized state has eigenvalue \(-1\), separated from the other eigenvalue \(+1\). Its Berry connection and radial integral are
\begin{align}
&\begin{aligned}
&A=iu^\dagger du=-(1/2+x)(dy+2z\,dz),\\
&\phi=-\left(y/2+xy/2+z^2/2+2xz^2/3\right).
\end{aligned}
\label{eq:bloch-radial}
\end{align}
Direct differentiation gives
\begin{align}
&A\wedge dA=0,\qquad
d(\phi\,dA)=\frac{xz}{3}\,dx\wedge dy\wedge dz.
\label{eq:bloch-curvatures}
\end{align}
Thus the proposed universal reconstruction does not yield its claimed Chern--Simons curvature. This conclusion concerns that construction, not independently established Hopf invariants or response formulas.

\section{T-duality and localization}
\label{sec:tduality}

\subsection{The ordinary Hori transform}

T-duality exchanges circle geometry and three-form flux. Let \(\pi:Z\to X\) and \(\widehat\pi:\widehat Z\to X\) be a T-dual pair of principal circle bundles. Choose invariant connections \(A,\widehat A\), each integrating to one on its fibre, and closed invariant flux forms \(H,\widehat H\). On the correspondence space \(Z\times_X\widehat Z\), with projections \(p:Z\times_X\widehat Z\to Z\) and \(\widehat p:Z\times_X\widehat Z\to\widehat Z\), choose representatives satisfying
\begin{align}
&d(-p^*A\wedge\widehat p^*\widehat A)
=p^*H-\widehat p^*\widehat H.
\label{eq:hori-kernel}
\end{align}
Write \(d_H=d+H\wedge\). For a differential form \(\omega\) on \(Z\) invariant under its principal circle action, define the Hori transform by
\begin{align}
&T\omega=\widehat p_*\left(e^{-p^*A\wedge\widehat p^*\widehat A}p^*\omega\right).
\label{eq:hori}
\end{align}
Fibre integration is normalized by \(\widehat p_*(p^*A\wedge\widehat p^*\eta)=\eta\) for \(\eta\in\Omega(\widehat Z)\). The form identity in Eq.~\eqref{eq:hori-kernel} and integration over a one-dimensional fibre give \(d_{\widehat H}T=-Td_H\). To see that \(T\) is invertible, we use the unique decomposition \(\omega=\pi^*\omega_0+A\wedge\pi^*\omega_1\), with \(\omega_0,\omega_1\in\Omega(X)\). In this sign convention,
\begin{align}
&T\omega=\widehat\pi^*\omega_1-\widehat A\wedge\widehat\pi^*\omega_0.
\label{eq:hori-components}
\end{align}
This explicitly invertible operation reverses parity and induces the classical twisted-cohomology isomorphism \cite{Hori1999,BouwknegtEvslinMathai2004,HanMathai2026}. Its computation uses ordinary connections, forms, and fibre integration.

\subsection{Mapping-space localization and characteristic classes}

The loop-space construction of Ref.~\cite{HanMathai2015} uses the holonomy line bundle \(\mathcal L\to LZ\) with connection \(\nabla^{\mathcal L}\). Let \(\mathbb T=S^1\) act by rotation of loops, with infinitesimal generator \(K\). If \(\overline H\) is the averaged evaluation pullback of \(H\), the differential on rotation-invariant \(\mathcal L\)-valued forms is
\begin{align}
&D=\nabla^{\mathcal L}-u\iota_K+u^{-1}\overline H\wedge,
\qquad |u|=2.
\label{eq:loop-differential}
\end{align}
Here \(u\) is a formal variable of degree two and \(\iota_K\) denotes contraction with \(K\). The connection curvature cancels the contraction of \(\overline H\), so \(D^2=0\) on those forms. The complex is completed in the periodic grading: at a fixed total degree it permits arbitrarily high ordinary form degrees. Denote the cohomology of this completed complex by \(h^\bullet_{\mathbb T}(LZ,\nabla^{\mathcal L}:\overline H)\). Restriction to constant loops induces an isomorphism
\begin{align}
&\operatorname{res}_Z:
h^\bullet_{\mathbb T}(LZ,\nabla^{\mathcal L}:\overline H)
\xrightarrow{\ \cong\ }
H^\bullet\!\left(\Omega(Z)\widehat\otimes\mathbb C[u,u^{-1}],d+u^{-1}H\wedge\right).
\label{eq:loop-localization}
\end{align}
The hat indicates the preceding completion. The proof in Ref.~\cite{HanMathai2015} uses localization near the rotation-fixed maps and a contracting homotopy away from them. The localization theorem requires an invariant neighborhood of the fixed set with an invariant good cover and an equivariant deformation retraction. The loop-space example satisfies this condition near the constant loops.

Ref.~\cite{HanMathai2026} treats \(S^2Z=C^\infty(S^2,Z)\). After an axis of rotation is chosen, a fixed map is constant on each latitude and therefore determines a path from the south pole to the north pole. With normalized spherical measure, localization first restricts to these paths. Contraction to constant paths then includes an exponential correction for the twisting form. Write \(\operatorname{Loc}_Z\) for the induced isomorphism to the same target cohomology. In the T-duality construction the gerbes with connection are additionally equivariant for the structure-circle actions, and the relevant forms are invariant under these actions.

To pass from ordinary twisted forms to the periodic complex, set \(c_u(\omega_j)=u^{-\lfloor j/2\rfloor}\omega_j\) for a homogeneous \(j\)-form and extend \(\mathbb C[u,u^{-1}]\)-linearly. This induces an isomorphism \(c_{u,H}\) on cohomology. The periodic Hori map is \(T_u=c_{u,\widehat H}Tc_{u,H}^{-1}\), and the sphere-space transform is defined by
\begin{align}
&T^{S^2}=\operatorname{Loc}_{\widehat Z}^{-1}\,T_u\,\operatorname{Loc}_Z.
\label{eq:transported-duality}
\end{align}
The inverse here is on cohomology. Equation~\eqref{eq:transported-duality} determines the map completely from target-space data. It is not a separate Hori integral over an infinite-dimensional correspondence space.

The character maps below have as their common domain complex K-theory of \(Z\) with the fixed integral twist. Their targets are respectively ordinary twisted cohomology and the two mapping-space cohomologies.

\begin{proposition}
\label{prop:character-kernels}
Fix an integral twist on \(Z\), and assume the hypotheses under which the localization maps \(\operatorname{res}_Z\) and \(\operatorname{Loc}_Z\) above are isomorphisms. With the preceding periodic normalization, the three character maps on twisted complex K-theory satisfy
\begin{align}
&\ker\operatorname{BCh}_H=
\ker\operatorname{BBCh}_H=
\ker\operatorname{Ch}_H.
\label{eq:character-kernels}
\end{align}
\end{proposition}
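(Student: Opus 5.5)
The plan is to show that each mapping-space character factors through the ordinary twisted Chern character followed by an isomorphism. Kernels are preserved under post-composition with an injective map, so the three kernels then coincide. Concretely, I will aim for the factorizations
\begin{align*}
&\operatorname{res}_Z\circ\operatorname{BCh}_H=c_{u,H}\circ\operatorname{Ch}_H,\qquad
\operatorname{Loc}_Z\circ\operatorname{BBCh}_H=c_{u,H}\circ\operatorname{Ch}_H,
\end{align*}
as maps from twisted K-theory to \(H^\bullet(\Omega(Z)\widehat\otimes\mathbb C[u,u^{-1}],d+u^{-1}H\wedge)\). By hypothesis \(\operatorname{res}_Z\) and \(\operatorname{Loc}_Z\) are isomorphisms, and \(c_{u,H}\) is an isomorphism on cohomology, so each factorization yields \(\ker\operatorname{BCh}_H=\ker\operatorname{Ch}_H\) and \(\ker\operatorname{BBCh}_H=\ker\operatorname{Ch}_H\) respectively.

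For the loop-space identity, I represent a twisted K-class by a twisted bundle with connection compatible with the gerbe connection. The Bismut-type character \(\operatorname{BCh}_H\) is built from the holonomy of this connection around loops, tensored into \(\mathcal L\), together with the \(u\)-graded curvature terms. It is \(D\)-closed by the cancellation noted after Eq.~\eqref{eq:loop-differential}. Restricting to constant loops kills the holonomy along the loop and every contraction \(\iota_K\), since \(K\) vanishes there. What remains is \(\tr\exp(-u^{-1}F)\) with the twisted curvature. Expanding degree by degree, this is exactly \(c_{u,H}\) applied to the ordinary twisted Chern form representing \(\operatorname{Ch}_H\). With the normalization \(c_u(\omega_j)=u^{-\lfloor j/2\rfloor}\omega_j\), the even forms pick up precisely the powers of \(u\) that the exponential produces. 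This part is a bookkeeping check on representatives. Independence of the choice of connection follows because both sides are cohomology classes natural under transgression homotopies.

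For the sphere-space identity, I apply the same argument in two stages, mirroring the definition of \(\operatorname{Loc}_Z\). The first stage restricts to rotation-fixed maps, which are paths from pole to pole. The second contracts these to constant paths, with the exponential correction for the twisting form. The main obstacle is this second stage. I must check that the exponential correction appearing in \(\operatorname{Loc}_Z\) exactly cancels the corresponding factor in \(\operatorname{BBCh}_H\), coming from transgression of the curving along the path, when it is restricted to constant paths. Otherwise the two sides could differ by an invertible twist \(e^{u^{-1}\beta}\) for some form \(\beta\).

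I will first try the direct route: evaluate \(\operatorname{BBCh}_H\) on constant maps \(S^2\to Z\), where all path-transport and curving integrals degenerate. Then I use that \(\operatorname{Loc}_Z\) is, by construction, the map whose value on constant maps is the identity after the correction. Even if the cancellation turned out to be off by such a factor, the conclusion about kernels survives. Multiplication by \(e^{u^{-1}\beta}\) is invertible and intertwines the relevant twisted differentials, so it is itself an isomorphism on cohomology, and the kernel equality in Eq.~\eqref{eq:character-kernels} holds either way.
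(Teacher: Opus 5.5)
Your argument is the paper's: the paper likewise uses $\operatorname{res}_Z\operatorname{BCh}_H=\operatorname{Ch}_H$ and $\operatorname{Loc}_Z\operatorname{BBCh}_H=\operatorname{Ch}_H$ together with injectivity of the two localization isomorphisms, and your explicit $c_{u,H}$ is what the statement's ``preceding periodic normalization'' refers to. The paper simply quotes both identities from Han--Mathai, with the exponential correction already built into $\operatorname{Loc}_Z$, so your hedge about a residual factor $e^{u^{-1}\beta}$ (whose form you assume rather than derive) is unnecessary; moreover, your loop-space rederivation through finite-rank twisted bundles, as written, covers only torsion twists, so for a general integral twist you should cite that identity rather than rederive it this way.
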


\begin{proof}
The character maps satisfy
\begin{align}
&\operatorname{res}_Z\operatorname{BCh}_H=\operatorname{Ch}_H,
\qquad
\operatorname{Loc}_Z\operatorname{BBCh}_H=\operatorname{Ch}_H
\end{align}
\cite{HanMathai2015,HanMathai2026}. Both localization maps are isomorphisms. Their injectivity proves Eq.~\eqref{eq:character-kernels}.
\end{proof}

The mapping-space character classes consequently distinguish exactly the same twisted K-classes as the original character, and add no torsion information. Their explicit differential-form representatives and connection transgressions are additional geometric constructions before passage to cohomology. Thus localization gives a complete target-space computation of the classes and duality map, while comparison of their differential-form representatives remains a separate task. The integral twisting class in this comparison is specified independently of its curvature form.

\section{BF theory and particle--string linking}
\label{sec:bf}

\subsection{Compact fields and finite gauge theory}

BF theory couples a gauge curvature \(F\) to a multiplier field \(B\). In its compact abelian form it describes finite gauge theory and the topological phase obtained by linking a particle trajectory with a string worldsheet \cite{KapustinSeiberg2014}. We compute its partition function and particle--string linking amplitude.

Let \(Y\) be a closed oriented four-manifold, let \(N\) be a positive integer, and use compact fields with holonomies \(e^{i\oint A}\) and \(e^{i\int B}\). We take the Euclidean weight \(e^{-S_E}\), with local action
\begin{align}
&S_E[A,B]=\frac{iN}{2\pi}\int_Y B\wedge dA.
\label{eq:compact-bf}
\end{align}
Here \(A\) is a circle connection, while \(B\) is a two-form gauge field with the transition data of Section~\ref{sec:surface}, rescaled by \(2\pi\). A global expression uses differential characters \(\widehat A\in\widehat H^2(Y;\mathbb Z)\), \(\widehat B\in\widehat H^3(Y;\mathbb Z)\):
\begin{align}
&e^{-S_E}=\exp\left(-2\pi iN\int_Y\widehat B\mathbin\cup\widehat A\right).
\label{eq:bf-differential}
\end{align}
The differential cup product has degree five, and integration over \(Y\) gives an element of \(\R/\mathbb Z\). This includes the torsion sectors that global differential forms alone would miss \cite{KapustinSeiberg2014}.

Dualizing the compact \(B\) field gives a scalar \(\phi\) of period \(2\pi\) with
\begin{align}
&d\phi+NA=0,\qquad
A\longmapsto A+d\alpha,\qquad
\phi\longmapsto\phi-N\alpha.
\label{eq:bf-scalar}
\end{align}
Locally setting \(\phi=0\) leaves the residual transformations \(\alpha=2\pi k/N\), giving an ordinary \(\mathbb Z_N\) gauge field. Compactness and the global field sum matter: the classical equations \(dA=dB=0\) by themselves would still allow continuous flat holonomies. With finite bundle-groupoid normalization, the result on connected \(Y\) is $Z_{\mathrm{BF}}(Y)=\frac{|H^1(Y;\mathbb Z_N)|}{N}$.
Indeed, Fourier summation imposes \(\delta a=0\) on a finite one-cochain \(a\), and dividing by zero-cochain gauge transformations gives \(|Z^1|/|C^0|=|H^1|/|H^0|\). On a connected closed spatial three-manifold \(X\), the resulting state count is \(\dim\mathcal H(X)=|H^1(X;\mathbb Z_N)|\). For \(X=T^3\), it is \(N^3\). The normalization is specified by the finite sum, rather than inferred from an unspecified continuum measure.

\subsection{The linking amplitude}

Take disjoint closed oriented submanifolds \(C\) and \(\Sigma\) of dimensions one and two in \(S^4\), with insertions
\begin{align}
&W_q(C)=e^{iq\oint_C A},\qquad
V_m(\Sigma)=e^{im\int_\Sigma B},\qquad q,m\in\mathbb Z_N.
\label{eq:bf-insertions}
\end{align}
Let the two-current \(\delta_\Sigma\) satisfy \(\int_Y B\wedge\delta_\Sigma=\int_\Sigma B\). Integrating over \(B\) in the presence of \(V_m\) imposes \(dA=(2\pi m/N)\delta_\Sigma\). If \(D\) is an oriented two-chain with \(\partial D=C\), Stokes' theorem gives the normalized correlation function
\begin{align}
&\langle W_q(C)V_m(\Sigma)\rangle
=\exp\left(\frac{2\pi iqm}{N}L(C,\Sigma)\right),\qquad
L(C,\Sigma)=D\cdot\Sigma.
\label{eq:bf-linking}
\end{align}
This is the particle--string Aharonov--Bohm phase, obtained by a field constraint and an intersection number \cite{KapustinSeiberg2014}. On general manifolds, the global cycle classes supply additional selection rules and torsion pairings. The displayed derivation uses \(S^4\). It already shows that a nontrivial line--surface observable does not require the 2-bundle presentation.

On a manifold with boundary, \(B\mapsto B+d\Lambda\) changes the action by \((iN/2\pi)\int_{\partial Y}\Lambda\wedge dA\). Boundary conditions or compensating boundary fields must cancel this variation. Matching a closed-manifold sum does not permit us to discard it in a boundary theory.

\subsection{Euclidean and Poincar\'e BFCG actions}

The BFCG construction introduces multiplier fields for both curvatures of a higher connection \cite{GirelliPfeifferPopescu2008}. A direct comparison occurs for the Euclidean group \(\mathbb R^4\rtimes SO(4)\), or its Poincar\'e analogue. Decompose the ordinary connection into a rotational one-form \(A\) and a translation-valued one-form \(C\). Let \(B\) and \(\Sigma\) be two-forms in the dual rotational and translation representations. Writing \(d_A\) for the covariant exterior derivative, the invariant pairings give
\begin{align}
&\begin{aligned}
&S_{\mathrm{BF}}=\int_Y\langle B\wedge F_A\rangle
+\int_Y\langle\Sigma\wedge d_A C\rangle,\\
&S_{\mathrm{BFCG}}=\int_Y\langle B\wedge F_A\rangle
+\int_Y\langle C\wedge d_A\Sigma\rangle.
\end{aligned}
\label{eq:bf-bfcg}
\end{align}
In the second action, \((A,\Sigma)\) are the higher connection fields, using the invariant metric to identify translation vectors and covectors. This is the BFCG formulation associated with the Euclidean or Poincar\'e 2-group \cite{GirelliTsimiklis2022}. The covariant Leibniz rule gives
\begin{align}
&S_{\mathrm{BF}}-S_{\mathrm{BFCG}}
=\int_Y d\langle\Sigma\wedge C\rangle
=\int_{\partial Y}\langle\Sigma\wedge C\rangle.
\label{eq:bf-boundary}
\end{align}
The sign follows from exchanging \(d_A\Sigma\), a three-form, with \(C\), a one-form. We assume the displayed fields and pairing define the global three-form in Eq.~\eqref{eq:bf-boundary}.

On a closed manifold the two actions and their bulk equations agree. On a spatial hypersurface their presymplectic potentials differ by the field-space differential of \(\int\langle\Sigma\wedge C\rangle\), so the presymplectic two-form is unchanged. Ref.~\cite{GirelliTsimiklis2022} interprets this as a partial change of polarization and studies its consequences for discretization. The higher connection therefore reorganizes these continuum fields without adding a bulk term to the action.

This identity also fixes the extent of the comparison with gravity. To obtain gravity from a BF-type theory, one must impose appropriate simplicity constraints and select a nondegenerate gravitational sector. Quantizing that sector is an additional problem \cite{DePietriFreidel1999}. Equation~\eqref{eq:bf-boundary} alone neither imposes those constraints nor identifies all constrained quantum amplitudes. More general crossed-module BFCG actions contain a fake-curvature coupling absent from Eq.~\eqref{eq:bf-bfcg} \cite{FariaMartinsMikovic2011}. Their reduction cannot be inferred from this particular boundary identity.

\section{Finite 2-group gauge theories and their actions}
\label{sec:finite}

Finite gauge theories describe topological information in gapped phases, and finite 2-group models have explicit lattice realizations with local commuting projectors \cite{KapustinThorngren2017,Huang2026}. We compare their closed partition functions, first by an exact Fourier transform and then by a quadratic example that shows where the ordinary-gauge reduction fails. The duality and its action restrictions are already recognized in Ref.~\cite{KapustinThorngren2017}.

\subsection{Three-dimensional partition functions}

Consider a finite homotopy gauge theory determined by a finite 2-group \(\mathcal G\). Its data can be written as finite groups \(Q,M\), with \(M\) abelian, an action \(\rho:Q\to\operatorname{Aut}(M)\), and a Postnikov class \([\beta]\in H^3(Q,M_\rho)\). For a crossed module these are its cokernel, kernel, induced action, and extension class. Choose a cocycle representative \(\beta\). A field on a triangulated manifold \(Y\) consists of a flat \(Q\)-bundle \(A\) and a two-cochain \(B\in C^2(Y;M_A)\) satisfying \(\delta_A B=\beta(A)\), where \(M_A\) is the associated coefficient local system and \(\delta_A\) its coboundary. The field has the corresponding 1-gauge and 2-gauge transformations \cite{KapustinThorngren2017}.

Let \(\widehat M=\operatorname{Hom}(M,\U(1))\), with \(q\cdot\chi=\chi\circ\rho(q^{-1})\). For a flat bundle \(A\), write \(\widehat M_A\) for the associated dual local system. For a character \(\chi\), let \(Q_\chi\) be its stabilizer. The function
\begin{align}
&\omega_\chi(q_1,q_2,q_3)=\chi(\beta(q_1,q_2,q_3)),
\qquad q_i\in Q_\chi,
\label{eq:character-cocycle}
\end{align}
is an ordinary circle-valued three-cocycle. Invariance of \(\chi\) under \(Q_\chi\) removes the coefficient action from the cocycle identity.

\begin{theorem}[Finite Fourier decomposition]
\label{thm:finite-three}
Let \(\mathcal G\) be the finite 2-group with data \((Q,M,\rho,[\beta])\) specified above, and let \(Y\) be a closed connected oriented three-manifold. For the theory with zero topological action and homotopy-cardinality normalization \cite{MartinsPorter2007}, the partition function is
\begin{align}
&Z_{\mathcal G}(Y)=
\sum_{[\chi]\in\widehat M/Q}
Z_{\mathrm{DW}}(Y;Q_\chi,\omega_\chi).
\label{eq:finite-three}
\end{align}
On the right,
\begin{align}
&Z_{\mathrm{DW}}(Y;K,\omega)=
\sum_{[P]\in\operatorname{Bun}_K(Y)}
\frac{\langle P^*\omega,[Y]\rangle}{|\operatorname{Aut}P|}
\label{eq:dw-definition}
\end{align}
is the ordinary Dijkgraaf--Witten partition function. Here \(\operatorname{Bun}_K(Y)\) indexes isomorphism classes of principal \(K\)-bundles, and \(\operatorname{Aut}P\) is the gauge automorphism group of \(P\).
\end{theorem}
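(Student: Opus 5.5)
We will write the homotopy-cardinality partition function as an explicit finite sum over cochains and then perform the sum over the two-form field by Fourier transformation on \(M\). We first fix a triangulation of \(Y\) with vertex set \(V\) and edge, face, and tetrahedron sets. The normalized sum is
\begin{align}
&Z_{\mathcal G}(Y)=\frac{|C^0(Y;M)|}{|C^0(Y;Q)|\,|C^1(Y;M)|}
\sum_{A}\#\{B\in C^2(Y;M_A):\delta_AB=\beta(A)\}.
\end{align}
Here \(A\) ranges over flat \(Q\)-connections, meaning edge labels with trivial face holonomy. The prefactor is the homotopy-cardinality weight \(|\text{2-gauge}_0|/(|\text{1-gauge}|\,|\text{2-gauge}_1|)\) of \cite{MartinsPorter2007}. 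We do not yet quotient by gauge, so no orbit bookkeeping is needed.

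The key step counts solutions of the inhomogeneous equation \(\delta_AB=\beta(A)\) with characters. We write the indicator of \(\delta_AB-\beta(A)=0\) in \(C^3(Y;M_A)\) as an average over dual three-cochains \(c\in C^3(Y;\widehat M_A)\). Since \(Y\) is closed of dimension three, the sum over \(B\) of \(\langle c,\delta_AB\rangle\) forces \(c\) to be a twisted cocycle; on three-cochains this means \(c\) is cohomologous to a flat section. Dually, via Poincaré duality on the cell complex, \(c\) is a flat section \(\chi\) of \(\widehat M_A\) over the dual vertices, i.e. \(\chi\in H^0(Y;\widehat M_A)\). This gives
\begin{align}
&\#\{B\}=\frac{|C^2(Y;M)|}{|C^3(Y;M)|}\sum_{\chi\in H^0(Y;\widehat M_A)}\chi\bigl(\langle\beta(A),[Y]\rangle\bigr)^{-1}.
\end{align}
The pairing is well defined because \(\chi\) is parallel. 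A global flat section \(\chi\) exists exactly when the holonomy of \(A\) fixes \(\chi(v_0)\), that is, when \(A\) reduces to \(Q_\chi\). In that case \(\chi(\beta(A))=\omega_\chi(A)\), evaluated on \([Y]\), up to orientation conventions (or replacing \(\chi\) by \(\chi^{-1}\), which permutes orbits).

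Next we swap sums and sum over pairs \((A,\chi)\) with \(\chi\) parallel. Fixing the value \(\chi(v_0)\in\widehat M\) in an orbit \([\chi]\), the gauge group \(C^0(Y;Q)\) acts transitively on the choices in the orbit. The count becomes
\begin{align}
&\frac{|Q|}{|Q_\chi|}\cdot\frac{|C^0(Y;Q_\chi)|}{|C^0(Y;Q)|}\cdot\#\{\text{flat }Q_\chi\text{-connections}\}\cdot\text{(phase)},
\end{align}
using gauge fixing so that the section is constant. More carefully, we will pass to the pairs \((A,\chi)\) modulo \(Q\)-gauge: the stabilizer of a constant section in the gauge group is \(C^0(Y;Q_\chi)\), and flat \(Q\)-connections preserving a constant \(\chi\) are exactly flat \(Q_\chi\)-connections. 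This converts \(\sum_A\sum_\chi\) into \(\sum_{[\chi]}|C^0(Y;Q_\chi)|^{-1}\sum_{A'\ \text{flat}\ Q_\chi}\omega_\chi(A')\). That is the standard lattice form of \(Z_{\mathrm{DW}}(Y;Q_\chi,\omega_\chi)\), equal to Eq.~\eqref{eq:dw-definition}.

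Finally we check that the remaining \(M\)-factors cancel. We need \(|C^0(M)||C^2(M)|/(|C^1(M)||C^3(M)|)=|M|^{-\chi_{\mathrm{Euler}}(Y)}=1\) for a closed three-manifold. We expect the main obstacle to be the Fourier step for twisted coefficients: stating precisely that the annihilator of \(\operatorname{im}\delta_A\) in the dual three-cochains is \(H^0(Y;\widehat M_A)\) via twisted Poincaré duality, and tracking the orientation and sign conventions so that the phase is exactly \(\omega_\chi\) rather than its inverse. We will handle this by working on the dual cell decomposition, where \(\delta_A^{\,*}\) on three-cochains becomes the twisted coboundary on dual zero-cochains. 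The independence from the representative \(\beta\) and from the triangulation then follows from that of the Dijkgraaf--Witten terms.
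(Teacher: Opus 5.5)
Your proposal is correct and is essentially the paper's argument, carried out on cochains: character orthogonality pairs \(\beta(A)\) against parallel sections \(\chi\in H^0(Y;\widehat M_A)\), the \(|M|\)-factors cancel by \(\chi(Y)=0\), and orbit--stabilizer reduces to flat \(Q_\chi\)-connections. You compute the annihilator of \(\operatorname{im}\delta_A\) on dual vertices and use raw cell counts for the Euler factor, where the paper instead uses the torsor count \(|H^2||H^0|/|H^1|\) and Poincar\'e duality between \(H^3(Y,M_A)\) and \(H^0(Y,\widehat M_A)\). When you write it up, discard the first displayed orbit count: the pairs \((A,\chi)\) with \(\chi\) valued in the orbit of \(\chi_0\) number \(|C^0(Y;Q)|/|C^0(Y;Q_{\chi_0})|\) times the flat \(Q_{\chi_0}\)-connections, and that gives your ``more careful'' formula. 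Also state explicitly that \(\langle\chi,\beta(A)\rangle\) is invariant under joint gauge transformations of \((A,\chi)\), because \(\beta(A^g)\) differs from the transported \(\beta(A)\) by a \(\delta_{A^g}\)-coboundary that pairs trivially with a parallel section; this is what justifies gauge-fixing \(\chi\) to be constant.
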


\begin{proof}
Fix a flat \(Q\)-bundle \(A\), its coefficient local system \(M_A\), and the obstruction \(\beta_A=A^*[\beta]\in H^3(Y,M_A)\). When \(\beta_A=0\), lifts form an \(H^2(Y,M_A)\)-torsor. Their 1-gauge and 2-gauge automorphisms contribute \(H^1(Y,M_A)\) and \(H^0(Y,M_A)\), respectively. The contribution over the marked bundle is therefore
\begin{align}
&\frac{|H^2(Y,M_A)|\,|H^0(Y,M_A)|}{|H^1(Y,M_A)|}
\mathbf1_{\beta_A=0}.
\end{align}
For finite cellular cochains, cancellation of kernels and images gives, with \(\chi(Y)\) denoting the Euler characteristic,
\begin{align}
&\prod_{j=0}^3|H^j(Y,M_A)|^{(-1)^j}=|M|^{\chi(Y)}=1.
\end{align}
Thus the contribution equals \(|H^3(Y,M_A)|\mathbf1_{\beta_A=0}\). Poincar\'e duality with local coefficients and character duality give a perfect pairing between \(H^3(Y,M_A)\) and \(H^0(Y,\widehat M_A)\). Character orthogonality yields
\begin{align}
&|H^3(Y,M_A)|\mathbf1_{\beta_A=0}
=\sum_{\chi\in H^0(Y,\widehat M_A)}\langle\chi,\beta_A\rangle.
\label{eq:finite-fourier}
\end{align}
Summing over \(A\) with weight \(1/|\operatorname{Aut}A|\) is the groupoid sum over pairs \((A,\chi)\). Since \(Y\) is connected, the parallel section \(\chi\) lies in one character orbit. A \(Q\)-bundle with a section in that orbit is equivalently a \(Q_\chi\)-bundle. Orbit--stabilizer gives the automorphism weight in Eq.~\eqref{eq:dw-definition}. Its phase is Eq.~\eqref{eq:character-cocycle}. This proves Eq.~\eqref{eq:finite-three}.
\end{proof}

This is the finite-character calculation behind the dualization in Ref.~\cite{KapustinThorngren2017}. The orbit/stabilizer expression also appears in the broader orbifold discussion of Ref.~\cite{PerezLonaSharpe2024}. The proof here concerns the stated finite gauge sum. If \(\rho\) is trivial and \(\beta=0\), it gives \(|M|\) copies of ordinary \(Q\)-gauge theory. For \(Q=\mathbb Z_2\), \(M=\mathbb Z_3\), inversion action, and \(\beta=0\), there are two character orbits and
\begin{align}
&Z_{\mathcal G}(Y)=Z_{\mathrm{DW}}(Y;\mathbb Z_2,1)+1.
\label{eq:inversion-example}
\end{align}
The summands and stabilizers are part of the ordinary description. They are obtained by a finite Fourier transform, without a categorical construction of the partition function.

\subsection{Three-dimensional cohomological actions}

The decomposition is not limited to zero action. Write \(\widehat M=\operatorname{Hom}(M,\R/\mathbb Z)\) additively, with the same dual \(Q\)-action. The general bosonic degree-three cohomological action can be represented by
\begin{align}
&\begin{aligned}
&\lambda\in Z^1(Q,\widehat M),\qquad
\omega\in C^3(Q,\R/\mathbb Z),\qquad
\delta\omega=\langle\lambda\smile\beta\rangle,\\
&L(A,B)=\langle\lambda(A)\smile B\rangle+\omega(A).
\end{aligned}
\label{eq:finite-three-action}
\end{align}
The brackets pair dual coefficient systems, with transport in the cup product. The weight is \(\exp(2\pi i\langle L(A,B),[Y]\rangle)\). The Leibniz rule and \(\delta_A B=\beta(A)\) give \(\delta L=0\). Ref.~\cite{KapustinThorngren2017} derives this action and its dualization. In particular, there is no independent pure two-form action in degree three. A four-dimensional quadratic term will behave differently.

For \(p\in\widehat M\), define an affine action and its stabilizer by
\begin{align}
&q\star p=q\cdot p+\lambda(q),\qquad
Q_p=\{q\in Q:q\star p=p\}.
\label{eq:affine-character-action}
\end{align}
The one-cocycle identity for \(\lambda\) makes this a group action. On \(Q_p\), we have \(\delta p=-\lambda\), so
\begin{align}
&\omega_p=\omega|_{Q_p}+\langle p,\beta|_{Q_p}\rangle,
\qquad \delta\omega_p=0.
\label{eq:affine-cocycle}
\end{align}

\begin{proposition}
\label{prop:finite-action}
Let \((\lambda,\omega)\) satisfy Eq.~\eqref{eq:finite-three-action}, and let \(L\) be the resulting cohomological action. On every closed connected oriented three-manifold \(Y\), with the normalization of Theorem~\ref{thm:finite-three},
\begin{align}
&Z_{\mathcal G,L}(Y)=
\sum_{[p]\in\widehat M/(Q,\star)}
Z_{\mathrm{DW}}(Y;Q_p,e^{2\pi i\omega_p}).
\label{eq:finite-affine}
\end{align}
\end{proposition}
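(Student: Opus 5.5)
We follow the proof of Theorem~\ref{thm:finite-three}, with the action weight carried through the sum over lifts. Fix a flat \(Q\)-bundle \(A\) and suppose \(\beta_A=0\). The lifts \(B\) then form a torsor over \(Z^2(Y,M_A)\) before gauge, and over \(H^2(Y,M_A)\) after gauge. The term \(\omega(A)\) does not depend on \(B\). The term \(\langle\lambda(A)\smile B\rangle\) is affine in \(B\): changing \(B\) to \(B+b\) with \(\delta_A b=0\) multiplies the weight by \(\exp(2\pi i\langle\lambda(A)\smile b,[Y]\rangle)\). This is a character of \(H^2(Y,M_A)\), paired through Poincar\'e duality with the class \([\lambda(A)]\in H^1(Y,\widehat M_A)\). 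Summing over lifts therefore keeps only the bundles \(A\) with \([\lambda(A)]=0\). Together with the earlier constraint \(\beta_A=0\), the counting factors (the Euler-characteristic cancellation from Theorem~\ref{thm:finite-three}) give weight \(|H^3|\) times a product of two indicator functions, times the phase evaluated on a reference lift.

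Our next step is to express both indicators, and the phase, through a single auxiliary field \(p\in C^0(Y;\widehat M_A)\). The condition \([\lambda(A)]=0\) says that \(\lambda(A)=-\delta_A p\) for some \(p\), determined up to a parallel section. Changing variables, we sum over pairs \((A,p)\) with \(\delta_A p=-\lambda(A)\), weighted by \(1/|H^0(Y,\widehat M_A)|\). Substituting \(\lambda(A)=-\delta_A p\) into the action gives \(\langle-\delta p\smile B\rangle=-\delta\langle p\smile B\rangle+\langle p\smile\delta_A B\rangle\), up to the sign conventions for the cup product. On a closed \(Y\) this is \(\langle p\smile\beta(A)\rangle\) modulo exact terms. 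The phase becomes \(\exp 2\pi i\langle\omega(A)+\langle p,\beta(A)\rangle,[Y]\rangle\), and it no longer depends on \(B\).

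It remains to remove the indicator \(\beta_A=0\). As in Eq.~\eqref{eq:finite-fourier}, it becomes a sum over characters. The cleanest route is to replace \(p\) by \(p+\chi\) with \(\chi\) parallel, and to observe that the sum over \(\chi\in H^0(Y,\widehat M_A)\) of \(\langle\chi,\beta_A\rangle\) produces \(|H^3|\mathbf1_{\beta_A=0}\). This should cancel the \(1/|H^0|\) and yield an unrestricted groupoid sum over pairs \((A,p)\) with \(\delta_A p=-\lambda(A)\). The condition \(\delta_A p=-\lambda(A)\) says exactly that \(p\) is a section that is flat for the affine action \(\star\) of Eq.~\eqref{eq:affine-character-action}. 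Because \(Y\) is connected, its values lie in one \(\star\)-orbit, so \((A,p)\) is equivalent to a \(Q_p\)-bundle. The orbit--stabilizer argument then reproduces the automorphism weight in Eq.~\eqref{eq:dw-definition}. On such a bundle the phase is the evaluation of \(\omega_p\) from Eq.~\eqref{eq:affine-cocycle}. This gives Eq.~\eqref{eq:finite-affine}.

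The main obstacle is the bookkeeping of the \(H^0\) and \(H^3\) factors. We must confirm that the affine reparametrization of \(p\) is compatible with the Fourier sum, so that the normalizations combine exactly as in Theorem~\ref{thm:finite-three}. We must also check that the cochain-level identity \(\langle\lambda\smile B\rangle=\langle p\smile\beta\rangle\) modulo coboundaries holds with the paper's transport conventions. If the direct route becomes tangled, we would first verify the result at the level of the total groupoid sum. In that approach we introduce \(p\) as a Lagrange multiplier for both constraints at once: a \(0\)-cochain of \(\widehat M_A\) whose coboundary pairs with \(B\) and whose constant part pairs with \(\beta_A\). Summing over \(B\) first, and then matching orbit counts, should give the same result.
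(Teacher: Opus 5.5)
Your proposal is correct and follows essentially the same route as the paper: sum over the \(H^2(Y,M_A)\)-torsor of lifts to impose \([\lambda(A)]=0\), integrate by parts with \(\lambda(A)=-\delta_Ap\) to get the \(B\)-independent phase \(\omega(A)+\langle p,\beta(A)\rangle\), use character orthogonality over the \(H^0(Y,\widehat M_A)\)-torsor of solutions \(p\) to handle \(\beta_A\neq0\), and finish by orbit--stabilizer for the affine action. The bookkeeping you flag closes more directly than your \(1/|H^0|\) detour suggests: since \(|H^0(Y,\widehat M_A)|=|H^3(Y,M_A)|\), the unweighted sum of that phase over the \(p\)-torsor already equals \(|H^3(Y,M_A)|\,\mathbf1_{\beta_A=0}\mathbf1_{[\lambda(A)]=0}\) times the phase, which is exactly the fixed-\(A\) lift contribution, as the paper observes.
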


\begin{proof}
Fix \(A\). If its obstruction \(\beta_A\) vanishes, the lift classes form an \(H^2(Y,M_A)\)-torsor. The weighted sum over this torsor vanishes unless \([\lambda(A)]=0\), by its perfect pairing with \(H^1(Y,\widehat M_A)\). When \(\lambda(A)=-\delta_Ap\), integration by parts gives
\begin{align}
&\int_Y\bigl(\langle\lambda(A)\smile B\rangle+\omega(A)\bigr)
=\int_Y\bigl(\omega(A)+\langle p,\beta(A)\rangle\bigr).
\label{eq:finite-action-parts}
\end{align}
The solutions \(p\) form a torsor for \(H^0(Y,\widehat M_A)\), whose size is \(|H^3(Y,M_A)|\). This is exactly the lift factor calculated in Theorem~\ref{thm:finite-three}. If \(\beta_A\ne0\), summing the right-hand phase over the same torsor instead gives zero by character orthogonality. If \([\lambda(A)]\ne0\), that torsor is empty. Thus in all cases the fixed-\(A\) contribution is the sum over affine parallel sections with the phase in Eq.~\eqref{eq:finite-action-parts}. Summing over \(A\), a section in the orbit of \(p\) reduces its structure group to \(Q_p\). Orbit--stabilizer gives the ordinary bundle automorphism weight, and the remaining action is \(\omega_p\).
\end{proof}

The affine action changes the ordinary gauge groups as well as their phases. Equation~\eqref{eq:finite-affine} covers ordinary cohomological actions in \(H^3(B\mathcal G;\R/\mathbb Z)\), where \(B\mathcal G\) is the classifying space of the finite 2-group. Spin-dependent actions, extra matter, and prescribed boundary theories require their own data. This is a replacement of the stated field sum with its full action, not only a comparison of the untwisted number of fields.

\subsection{Four dimensions and the Euler factor}

We now repeat the zero-action Fourier calculation in four dimensions. The dual field is an ordinary one-form gauge field, and we must keep the Euler factor. Write the character group additively as \(\widehat M=\operatorname{Hom}(M,\R/\mathbb Z)\), and set
\begin{align}
&P=\widehat M\rtimes Q,\qquad
(\chi,q)(\chi',q')=(\chi+q\cdot\chi',qq').
\label{eq:semidirect}
\end{align}
Let \(\pi:P\to Q\) be projection and \(C(\chi,q)=\chi\), a one-cocycle with coefficients in \(\widehat M\). Evaluation pairs \(\widehat M\) with \(M\). The circle-valued four-cocycle is
\begin{align}
&\Omega=\exp\bigl(2\pi i\langle C\smile\pi^*\beta\rangle\bigr),
\quad
\Omega(p_1,p_2,p_3,p_4)
=\exp\bigl(2\pi i\chi_1(\rho(q_1)\beta(q_2,q_3,q_4))\bigr),
\label{eq:four-cocycle}
\end{align}
where \(p_i=(\chi_i,q_i)\). The coefficient transport in this cup product is necessary when \(\rho\) is nontrivial.

\begin{proposition}
\label{prop:finite-four}
Let \(\mathcal G\) be the finite 2-group with data \((Q,M,\rho,[\beta])\) specified above, and let \(P\) and \(\Omega\) be defined by Eqs.~\eqref{eq:semidirect} and \eqref{eq:four-cocycle}. For the zero-action theory on every closed connected oriented four-manifold \(Y\),
\begin{align}
&Z_{\mathcal G}(Y)=|M|^{\chi(Y)}Z_{\mathrm{DW}}(Y;P,\Omega).
\label{eq:finite-four}
\end{align}
\end{proposition}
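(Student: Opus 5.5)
I will follow the proof of Theorem~\ref{thm:finite-three}, now in dimension four, and then identify the resulting sum over pairs with an ordinary \(P\)-gauge theory. First I fix a flat \(Q\)-bundle \(A\). If \(\beta_A=A^*[\beta]\in H^3(Y,M_A)\) is nonzero there are no lifts. If it vanishes, lifts form an \(H^2(Y,M_A)\)-torsor, and the 1-gauge and 2-gauge automorphisms contribute \(H^1\) and \(H^0\). The fixed-\(A\) contribution is therefore
\begin{align}
&\frac{|H^2(Y,M_A)|\,|H^0(Y,M_A)|}{|H^1(Y,M_A)|}\mathbf 1_{\beta_A=0}.
\end{align}
In four dimensions the cellular cancellation reads \(\prod_{j=0}^4|H^j(Y,M_A)|^{(-1)^j}=|M|^{\chi(Y)}\), because each cochain group \(C^j(Y;M_A)\) has order \(|M|^{\#j\text{-cells}}\). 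The contribution therefore becomes
\begin{align}
&|M|^{\chi(Y)}\,\frac{|H^3(Y,M_A)|}{|H^4(Y,M_A)|}\,\mathbf 1_{\beta_A=0}.
\end{align}
This is the source of the Euler factor in Eq.~\eqref{eq:finite-four}.

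Next I rewrite the remaining factor as a sum over dual one-cochains. Poincar\'e duality with local coefficients and character duality pair \(H^3(Y,M_A)\) perfectly with \(H^1(Y,\widehat M_A)\), and \(H^4(Y,M_A)\) with \(H^0(Y,\widehat M_A)\). I will work at the cochain level rather than on cohomology. Let \(c\in Z^1(Y;\widehat M_A)\) range over flat one-cocycles, modulo coboundaries of zero-cochains \(C^0(Y;\widehat M_A)\). Then
\begin{align}
&\frac{|Z^1(Y;\widehat M_A)|}{|C^0(Y;\widehat M_A)|}
=\frac{|H^1(Y,\widehat M_A)|}{|H^0(Y,\widehat M_A)|}
=\frac{|H^3(Y,M_A)|}{|H^4(Y,M_A)|},
\end{align}
and character orthogonality should give
\begin{align}
&\frac{|H^3(Y,M_A)|}{|H^4(Y,M_A)|}\mathbf 1_{\beta_A=0}
=\frac{1}{|C^0(Y;\widehat M_A)|}\sum_{c\in Z^1(Y;\widehat M_A)}
\exp\Bigl(2\pi i\int_Y\langle c\smile\beta(A)\rangle\Bigr).
\end{align}
The phase depends only on the class of \(c\) because \(\beta(A)\) is a cocycle. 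The sum over \(c\) vanishes unless \(\beta_A=0\), and equals \(|H^1|\) times the number of coboundary representatives otherwise.

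The main step, and the one I expect to be the obstacle, is to identify the groupoid of pairs \((A,c)\) with the groupoid of flat \(P\)-bundles, \(P=\widehat M\rtimes Q\). On a simplicial level, a \(P\)-valued one-cocycle \((c,a)\) consists of a \(Q\)-cocycle \(a\) together with a twisted one-cocycle \(c\) valued in the \(a\)-twisted \(\widehat M\). Gauge transformations by \(P\)-valued zero-cochains combine \(Q\)-gauge transformations with \(\widehat M\)-shifts \(c\mapsto c+\delta_A\mu\). The groupoid sum over \(A\) with weight \(1/|\operatorname{Aut}A|\), combined with the normalized sum over \(c\) divided by \(|C^0(Y;\widehat M_A)|\), is then the homotopy cardinality of flat \(P\)-connections. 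I will check this by counting cocycles over gauge cochains, \(|Z^1(Y;P)|/|C^0(Y;P)|\), so that automorphism weights never have to be tracked separately. Care is needed to confirm that the semidirect product law in Eq.~\eqref{eq:semidirect} matches the transport convention \(q\cdot\chi=\chi\circ\rho(q^{-1})\).

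Finally, I will identify the phase: for the \(P\)-cocycle \((c,a)\), the pullback of \(\Omega\) should equal \(\exp(2\pi i\langle c\smile\beta(a)\rangle)\). Checking this amounts to comparing Eq.~\eqref{eq:four-cocycle} with the cup product of local-system cochains on a 4-simplex. The factor \(\rho(q_1)\) there transports \(\beta\) from the second vertex to the first, which is exactly the coefficient transport in the twisted cup product. I will also verify that \(\Omega\) is a cocycle, using \(\delta C=0\) and \(\delta\beta=0\) with the twisted Leibniz rule. Assembling these steps gives \(Z_{\mathcal G}(Y)=|M|^{\chi(Y)}Z_{\mathrm{DW}}(Y;P,\Omega)\), which is Eq.~\eqref{eq:finite-four}.
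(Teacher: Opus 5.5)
Your proposal is correct and follows essentially the same route as the paper. Both use the four-dimensional Euler identity for the fixed-$A$ lift count, then Poincar\'e and character duality to turn $|H^3(Y,M_A)|/|H^4(Y,M_A)|\,\mathbf 1_{\beta_A=0}$ into an orthogonality sum over dual one-cocycles, and finally identify the pairs $(A,c)$ with flat $P$-bundles weighted by the pullback of $\Omega$. The only difference is bookkeeping: you count $|Z^1(Y;P)|/|C^0(Y;P)|$ at the cochain level, while the paper sums over classes in $H^1(Y,\widehat M_A)$ with relative automorphism group $H^0(Y,\widehat M_A)$, and these agree since $|Z^1|/|C^0|=|H^1|/|H^0|$.
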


\begin{proof}
The local-coefficient Euler identity expresses the fixed-\(A\) lift contribution as
\begin{align}
&|M|^{\chi(Y)}\frac{|H^3(Y,M_A)|}{|H^4(Y,M_A)|}\mathbf1_{\beta_A=0}.
\end{align}
Poincar\'e and character duality identify the duals of these cohomology groups with \(H^1(Y,\widehat M_A)\) and \(H^0(Y,\widehat M_A)\). Orthogonality makes the last expression
\begin{align}
&\frac{|M|^{\chi(Y)}}{|H^0(Y,\widehat M_A)|}
\sum_{c\in H^1(Y,\widehat M_A)}
\exp\bigl(2\pi i\langle c\smile\beta_A,[Y]\rangle\bigr).
\end{align}
The classes \(c\) describe lifts of the marked \(Q\)-bundle to \(P\)-bundles, with relative automorphism group \(H^0(Y,\widehat M_A)\). Summing over \(A\) produces Eq.~\eqref{eq:finite-four}. The cocycle identity for \(\Omega\) follows from the Leibniz rule, \(\delta C=0\), and \(\delta_\rho\beta=0\).
\end{proof}

On \(S^4\), the 2-group theory has partition function \(|M|/|Q|\), while ordinary \(P\)-gauge theory has \(1/(|M||Q|)\). The factor \(|M|^2\) in Eq.~\eqref{eq:finite-four} is consequently required. It can be represented by a local Euler term. Equations~\eqref{eq:finite-three}, \eqref{eq:finite-affine}, and \eqref{eq:finite-four} also preserve insertions depending only on the quotient bundle, after restriction or pullback in the ordinary theory. Boundary states and two-form observables must themselves be Fourier-transformed. Equality of closed partition functions alone does not supply that dictionary.

\subsection{Quadratic two-form actions and the signature}

A quadratic term shows why the four-dimensional hypothesis matters. Set \(Q=1\), \(M=\mathbb Z_2\), and let \(b\in H^2(Y;\mathbb Z_2)\) be the two-form class. The Pontryagin square is the cohomology operation
\begin{align}
&\mathcal P:H^2(Y;\mathbb Z_2)\longrightarrow H^4(Y;\mathbb Z_4),\qquad
\mathcal P(b+b')=\mathcal P(b)+\mathcal P(b')+2b\smile b'.
\label{eq:pontryagin-square}
\end{align}
The last term uses the inclusion \(\mathbb Z_2\to\mathbb Z_4\), \(1\mapsto2\). If \(b\) lifts to an integral class \(x\), then \(\mathcal P(b)=x\smile x\bmod4\). With the quadratic weight \(i^{\langle\mathcal P(b),[Y]\rangle}\), the partition function is
\begin{align}
&Z_q(Y)=\frac{|H^0(Y;\mathbb Z_2)|}{|H^1(Y;\mathbb Z_2)|}
\sum_{b\in H^2(Y;\mathbb Z_2)}i^{\langle\mathcal P(b),[Y]\rangle}.
\label{eq:quadratic-sum}
\end{align}
This is the finite two-form action and normalization of Ref.~\cite{KapustinThorngren2013Theta}, specialized to the quadratic function \(q(1)=1/4\).

For positively oriented \(\mathbb{CP}^2\), the integral generator has square one. Its two mod-two classes contribute \(1\) and \(i\), and the gauge factor is two. On \(S^4\) only the zero class occurs. Consequently,
\begin{align}
&Z_q(\mathbb{CP}^2)=2(1+i),\qquad Z_q(S^4)=2.
\label{eq:quadratic-example}
\end{align}
The nondegenerate Gauss-sum formula in Ref.~\cite{KapustinThorngren2013Theta} evaluates the same theory on a closed oriented four-manifold as
\begin{align}
&Z_q(Y)=2^{\chi(Y)/2}\exp\left(\frac{2\pi i}{8}\sigma(Y)\right),
\label{eq:signature-response}
\end{align}
where \(\sigma(Y)\) is the signature of its intersection form. Removing the Euler factor leaves an invertible signature response.

This response cannot be absorbed into the ordinary finite-group gauge theory of Eq.~\eqref{eq:finite-four} with only an Euler normalization. On a simply connected manifold an ordinary finite-group gauge sum has only the trivial bundle, with weight \(1/|K|\). Reversing orientation does not change it. In contrast, the phase in Eq.~\eqref{eq:signature-response} changes from \(e^{i\pi/4}\) to \(e^{-i\pi/4}\) on \(\mathbb{CP}^2\), while its Euler characteristic stays fixed. An independent gravitational phase would have to be supplied. The gerbe formulation is still unnecessary for evaluating Eq.~\eqref{eq:quadratic-sum}: ordinary cochains, their Pontryagin square, and the finite Gauss sum determine the answer. What fails here is the stronger replacement by an ordinary finite-group gauge sum alone, a limitation already identified in Ref.~\cite{KapustinThorngren2017}.

\section{Principal 2-bundles with adjusted connections}
\label{sec:data}

We now turn from fixed-background amplitudes and finite sums to the fields of a smooth nonabelian principal 2-bundle. We first state the adjusted connection and gluing data, then separate an ordinary global two-form and a local field system with one pure-gauge one-form.

\subsection{Connection and gluing data}
\subsubsection{The crossed module and its splitting}

We represent the structure Lie 2-group by a crossed module consisting of finite-dimensional Lie groups \(H,G\), a smooth homomorphism \(t:H\to G\), and a smooth action \(\alpha:G\to\operatorname{Aut}(H)\) such that
\begin{align}
&t(\alpha_g h)=g\,t(h)\,g^{-1},
\qquad
\alpha_{t(h)}h'=hh'h^{-1}.
\label{eq:peiffer}
\end{align}
We assume that it is smoothly separable: \(K=\ker t\) and \(F=G/t(H)\) have the corresponding Lie group structures, and the quotient maps have the usual local smooth sections. We also assume centrality, meaning that the \(G\)-action on \(K\) is trivial. In particular, \(K\) is abelian. These assumptions do not require \(G\) or \(H\) to be abelian.

Write \(\fa,\fh,\fg,\ff\) for the Lie algebras of \(K,H,G,F\), respectively, and \(p:G\to F\) for the quotient map. The differentiated sequence and a splitting are
\begin{align}
&0\longrightarrow\fa\xrightarrow{i}\fh\xrightarrow{t_*}\fg
\xrightarrow{p_*}\ff\longrightarrow0,
\qquad
u:\fg\longrightarrow\fh,
\label{eq:sequence}
\end{align}
where \(t_*ut_*=t_*\) and \(ut_*u=u\). Put \(V=\im u\). The associated linear maps \(q:\ff\to\fg\) and \(j:\fh\to\fa\) satisfy
\begin{align}
&t_*u+qp_*=1,\qquad ut_*+ij=1,\qquad uq=ju=0.
\label{eq:splitting}
\end{align}
Thus \(\fh=i\fa\oplus V\), and \(t_*:V\to\im t_*\) is an isomorphism. The complements are vector spaces. The map \(q\) is not assumed to preserve Lie brackets. All these linear maps are fixed, independently of the spacetime point.

\subsubsection{The adjustment and the fields}

We use the adjustment conventions of Ref.~\cite{Waldorf2026}. An adjustment is a smooth map \(\kappa:G\times\fg\to\fh\), linear in its second argument, satisfying
\begin{align}
&\begin{aligned}
&\kappa(g_1g_2,X)
 =\kappa(g_1,\Ad_{g_2}X)+\kappa(g_2,X),\\
&\kappa(t(h),X)=(\widetilde\alpha_{h^{-1}})_*X,\\
&\kappa(g,t_*Y)=\alpha_gY-Y.
\end{aligned}
\label{eq:adjustment}
\end{align}
Here \(\alpha_g\) also denotes the induced action on \(\fh\), and \(\widetilde\alpha_h:G\to H\) is the map \(g\mapsto h^{-1}\alpha_g(h)\). We assume that the adjustment is adapted to \(u\):
\begin{align}
&t_*\kappa(g,X)=t_*u(\Ad_gX-X).
\label{eq:adapted-adjustment}
\end{align}
The existence of a splitting alone does not assert the existence of such an adjustment.

Let \(\kappa_*:\fg\times\fg\to\fh\) be the derivative of \(\kappa\) in its first argument at the identity, and let \(\alpha_*:\fg\times\fh\to\fh\) be the differentiated action. We will use
\begin{align}
&\kappa(1,X)=0,\qquad
\kappa_*(X,t_*Y)=\alpha_*(X,Y),\qquad
t_*\kappa_*(X,Y)=t_*u[X,Y].
\label{eq:infinitesimal-adjustment}
\end{align}
The first two follow from Eq.~\eqref{eq:adjustment}, and the last follows by differentiating Eq.~\eqref{eq:adapted-adjustment}.

Let \(M\) be a smooth spacetime manifold. On a trivializing open set \(U\subset M\), a connection is a pair \(A\in\Omega^1(U,\fg)\), \(B\in\Omega^2(U,\fh)\). Define
\begin{align}
&\begin{aligned}
&F_A=dA+\tfrac12[A\wedge A],\qquad \F=F_A-t_*B,\\
&a=p_*A,\qquad F_a=da+\tfrac12[a\wedge a]=p_*\F,\\
&\Hk=dB+\alpha_*(A\wedge B)+\kappa_*(A\wedge\F).
\end{aligned}
\label{eq:curvatures}
\end{align}
The bilinear operations in these formulas include the exterior product. The connection is \emph{fake-flat} if \(\F=0\), and \emph{adapted} if \(u\F=0\). An adapted adjustment is part of the fixed gauge law. An adapted connection is a restricted field configuration. We allow connections that are not adapted.

\subsubsection{Gauge transformations and overlaps}

Let \(\bar\theta\) be the right Maurer--Cartan form on \(G\). An adjusted gauge transformation \((g,\varphi):(A,B)\to(A',B')\) consists of a map \(g:U\to G\) and \(\varphi\in\Omega^1(U,\fh)\) obeying
\begin{align}
&A'+t_*\varphi=\Ad_gA-g^*\bar\theta, \label{eq:gauge-A}\\
&d\varphi+\tfrac12[\varphi\wedge\varphi]
+\alpha_*(A'\wedge\varphi)+B'-\alpha_gB
=\kappa(g,\F).
\label{eq:gauge-B}
\end{align}
Composition is
\begin{align}
&(g_2,\varphi_2)\circ(g_1,\varphi_1)
=(g_2g_1,\varphi_2+\alpha_{g_2}\varphi_1).
\label{eq:composition}
\end{align}
The resulting curvature laws are
\begin{align}
&\F'=\Ad_g\F-t_*\kappa(g,\F),
\qquad
\Hk'=\Hk.
\label{eq:gauge-curvature}
\end{align}
These are the adjusted transformation identities of Ref.~\cite{Waldorf2026}.

Two gauge transformations with the same source and target may themselves be related by a map \(h:U\to H\). If \(\theta_H\) is the left Maurer--Cartan form on \(H\), its equations are
\begin{align}
&g_2=t(h)g_1,\qquad
\Ad_h^{-1}\varphi_2+(\widetilde\alpha_h)_*A'
=\varphi_1-h^*\theta_H.
\label{eq:gauge-two}
\end{align}
These higher transformations compare different choices of gluing on a triple overlap. Their equations contain \(A'\), but not \(B\).

We describe a principal 2-bundle with adjusted connection by the bundle-gerbe descent data of Ref.~\cite{Waldorf2026}: fields \((A_i,B_i)\) on an open cover, adjusted transformations \((g_{ij},\varphi_{ij}):(A_i,B_i)\to(A_j,B_j)\) on double overlaps, and maps \(h_{ijk}\) on triple overlaps. The latter identify two-step gluing with direct gluing:
\begin{align}
&\begin{aligned}
&g_{ik}=t(h_{ijk})g_{jk}g_{ij},\\
&\Ad_{h_{ijk}}^{-1}\varphi_{ik}
+(\widetilde\alpha_{h_{ijk}})_*A_k
 =\varphi_{jk}+\alpha_{g_{jk}}\varphi_{ij}
-h_{ijk}^*\theta_H.
\end{aligned}
\label{eq:triple}
\end{align}
On quadruple overlaps they satisfy
\begin{align}
&h_{ikl}\,\alpha(g_{kl},h_{ijk})=h_{ijl}h_{jkl}.
\label{eq:quadruple}
\end{align}
These are the descent conventions of Ref.~\cite{Waldorf2026}. Neither equation involves \(B_i\). Changing charts uses the same gauge transformations and transformations between them, with compatibility under composition. These objects, gauge transformations, and transformations between gauge transformations form the bigroupoid \(\Gadj(M)\). We denote its full sub-bigroupoid of adapted connections by \(\Gapt(M)\).

\subsection{The ordinary global two-form}
\label{sec:global}

We first show that the part of fake curvature selected by \(u\) is invariant under the full adjusted gauge symmetry. It therefore requires no nontrivial gauge gluing.

\begin{lemma}
\label{lem:invariant}
Let \((g,\varphi):(A,B)\to(A',B')\) be an adjusted gauge transformation on a trivializing open set \(U\). Then \(u(F_{A'}-t_*B')=u(F_A-t_*B)\). Consequently, the local forms \(c_i=u(F_{A_i}-t_*B_i)\) of a principal 2-bundle with adjusted connection glue to a global form \(c\in\Omega^2(M,V)\).
\end{lemma}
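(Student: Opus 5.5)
The invariance will follow from the fake-curvature transformation law in Eq.~\eqref{eq:gauge-curvature}, and the gluing from applying it to the transition transformations \((g_{ij},\varphi_{ij})\). By Eq.~\eqref{eq:gauge-curvature}, \(\F'=\Ad_g\F-t_*\kappa(g,\F)\). Substituting the adaptedness condition Eq.~\eqref{eq:adapted-adjustment}, with the exterior product understood in the second slot, gives \(t_*\kappa(g,\F)=t_*u(\Ad_g\F-\F)\). Hence
\begin{align*}
&\F'=\Ad_g\F-t_*u\Ad_g\F+t_*u\F .
\end{align*}

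We then apply \(u\) and use \(ut_*u=u\) from Eq.~\eqref{eq:sequence}. The two middle terms give \(u\Ad_g\F-u\Ad_g\F=0\), and the last term gives \(u\F\). Therefore \(u\F'=u\F\), which is the claimed identity \(u(F_{A'}-t_*B')=u(F_A-t_*B)\). This uses only fixed linear identities for \(u\), which hold pointwise and commute with the form degree.

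If one prefers not to quote Eq.~\eqref{eq:gauge-curvature} from Ref.~\cite{Waldorf2026}, the main work is to rederive it from Eqs.~\eqref{eq:gauge-A}--\eqref{eq:gauge-B}. The first step computes \(F_{A'}\) from \(A'=\Ad_gA-g^*\bar\theta-t_*\varphi\), giving \(\Ad_gF_A-t_*(d\varphi+\tfrac12[\varphi\wedge\varphi]+\alpha_*(A'\wedge\varphi))\) by the crossed-module identity \(t_*\alpha_*(X,Y)=[X,t_*Y]\). The second step substitutes Eq.~\eqref{eq:gauge-B} for the bracketed term and uses \(t_*\alpha_gB=\Ad_gt_*B\). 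Keeping the signs of the \(\varphi\) terms straight is the only delicate point, and I expect it to be the main obstacle.

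For the global statement, each transition \((g_{ij},\varphi_{ij}):(A_i,B_i)\to(A_j,B_j)\) is an adjusted gauge transformation on \(U_i\cap U_j\). The identity just proved therefore gives \(c_j=c_i\) there. Since the \(c_i\) are \(V\)-valued and agree on all overlaps, with no transition function acting on them, they define a single \(c\in\Omega^2(M,V)\). The triple-overlap data \(h_{ijk}\) play no role, because the gluing is trivial.
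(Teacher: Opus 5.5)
Your proposal is correct and follows the paper's proof essentially verbatim: apply \(u\) to the fake-curvature law \(\mathcal F'=\Ad_g\mathcal F-t_*\kappa(g,\mathcal F)\), use the adaptedness condition together with \(ut_*u=u\) to cancel the \(u\Ad_g\mathcal F\) terms, and read off \(c_i=c_j\) from the transition transformations on double overlaps. Your optional rederivation of that curvature law from the gauge equations is also sound, though the paper simply quotes it from Waldorf.
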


\begin{proof}
Applying \(u\) to Eq.~\eqref{eq:gauge-curvature} and using Eqs.~\eqref{eq:splitting} and \eqref{eq:adapted-adjustment} gives
\begin{align}
&u\F'=u\Ad_g\F-ut_*u(\Ad_g\F-\F)=u\F.
\end{align}
On a double overlap this is \(c_i=c_j\), so the forms glue. The same equality applies to a change of 2-bundle presentation.
\end{proof}

For an ordinary global \(\fh\)-valued two-form \(\lambda\), write \(\mathcal G^\lambda\) for the shift \(B_i\mapsto B_i+\lambda|_{U_i}\). This is the two-form shift of Ref.~\cite{Waldorf2026}. Its effect on the left side of Eq.~\eqref{eq:gauge-B} is \(\lambda-\alpha_g\lambda=-\kappa(g,t_*\lambda)\), exactly matching the change of its right side. It therefore preserves adjusted transformations. Equation~\eqref{eq:gauge-two} and all their compositions are unchanged.

\begin{theorem}[Global decomposition]
\label{thm:global}
For the crossed module, splitting, and adapted adjustment specified in Section~\ref{sec:data}, and for every smooth manifold \(M\), there is an equivalence of bigroupoids, natural under pullback,
\begin{align}
&\Gadj(M)\simeq
\Gapt(M)\times\Omega^2(M,V)_{\disc}.
\label{eq:global-product}
\end{align}
The two maps on objects are
\begin{align}
&\mathcal G\longmapsto(\mathcal G^c,c),
\quad c=u\F,
\qquad
(\mathcal G_0,c)\longmapsto\mathcal G_0^{-c}.
\label{eq:global-maps}
\end{align}
The ordinary two-form factor has only identity gauge arrows. If \(\mathcal H_0\) is the adjusted curvature of \(\mathcal G^c\), then
\begin{align}
&\Hk=\mathcal H_0-dc,\qquad
\mathcal H_0\in\Omega^3(M,i\fa).
\label{eq:global-curvature}
\end{align}
\end{theorem}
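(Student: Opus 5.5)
The plan is to build both functors explicitly from the global form \(c\) of Lemma~\ref{lem:invariant} and the shift \(\mathcal G^\lambda\), and then check that they are mutually inverse. On objects, \(\Phi(\mathcal G)=(\mathcal G^c,c)\), where \(c=u\F\) is global by the lemma. Regard \(c\) as an \(\fh\)-valued form through \(V\subset\fh\). On a gauge transformation \((g,\varphi)\) or a 2-morphism \(h\), \(\Phi\) keeps the same data in the first factor and takes the identity in the second. This is allowed because the lemma gives \(c'=c\) along any arrow, and because the shift preserves Eqs.~\eqref{eq:gauge-A}--\eqref{eq:gauge-B}, \eqref{eq:gauge-two}, composition \eqref{eq:composition}, and the descent equations \eqref{eq:triple}--\eqref{eq:quadruple}. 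Those descent equations do not involve \(B_i\). The inverse \(\Psi(\mathcal G_0,c)=\mathcal G_0^{-c}\) acts the same way on arrows. Naturality under pullback holds because \(u\), \(t_*\) and the shift commute with pullback of forms.

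First I will check that \(\mathcal G^c\) is adapted. Under \(B\mapsto B+c\) we get \(\F\mapsto\F-t_*c\), so \(u\F\mapsto u\F-ut_*c\). Since \(c\in V=\im u\), write \(c=uX\); then \(ut_*uX=uX\) by \(ut_*u=u\). Hence \(u\F\mapsto c-c=0\). Next, \(\Psi\Phi(\mathcal G)=\mathcal G^{c-c}=\mathcal G\) holds on the nose. For \(\Phi\Psi(\mathcal G_0,c)\) I need \(u\F(\mathcal G_0^{-c})=0+ut_*c=c\), so this composite also returns \((\mathcal G_0,c)\) exactly. Both composites are therefore identities, strictly on objects, 1-arrows and 2-arrows. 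Full faithfulness follows at once, because arrows in \(\Gadj\) from \(\mathcal G\) to \(\mathcal G'\) correspond to arrows between the shifted objects together with the forced equality \(c=c'\). This matches the discrete second factor.

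For the curvature formula, I will compute how \(\Hk\) changes under \(B\mapsto B+\lambda\) with \(\lambda=c\). Directly from Eq.~\eqref{eq:curvatures},
\begin{align}
\Hk(\mathcal G^{\lambda})=\Hk+d\lambda+\alpha_*(A\wedge\lambda)-\kappa_*(A\wedge t_*\lambda)=\Hk+d\lambda,
\end{align}
where the last two terms cancel by \(\kappa_*(X,t_*Y)=\alpha_*(X,Y)\) in Eq.~\eqref{eq:infinitesimal-adjustment}. Some care is needed with the graded sign convention for the wedge of a one-form with a two-form, but both terms carry the same convention. This gives \(\mathcal H_0=\Hk+dc\), which is Eq.~\eqref{eq:global-curvature}.

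The main obstacle is the claim \(\mathcal H_0\in\Omega^3(M,i\fa)\), that is, \(u\mathcal H_0=0\) or equivalently \(t_*\mathcal H_0=0\) on \(V\). My plan is to compute \(t_*\Hk\) for an adapted connection. Using \(t_*\alpha_*(X,Y)=[X,t_*Y]\) and \(t_*\kappa_*(X,Y)=t_*u[X,Y]\), I get
\begin{align}
t_*\Hk=dt_*B+[A\wedge t_*B]+t_*u[A\wedge\F].
\end{align}
Substituting \(t_*B=F_A-\F\) and applying the Bianchi identity turns this into \(-d\F-[A\wedge\F]+t_*u[A\wedge\F]=-(d\F+qp_*[A\wedge\F])\), by the splitting \eqref{eq:splitting}. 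In the adapted case \(\F=qp_*\F=qF_a\). Then \(p_*[A\wedge qF_a]=[a\wedge F_a]\), so the expression becomes \(-q(dF_a+[a\wedge F_a])\), which vanishes by the Bianchi identity for \(a\). The delicate point is keeping track of the fact that \(q\) is only linear, so I will avoid pulling brackets through \(q\) and only apply \(p_*\), which is a Lie homomorphism. Since \(t_*\) is injective on \(V\), \(t_*\mathcal H_0=0\) forces \(\mathcal H_0\in\ker t_*=i\fa\).
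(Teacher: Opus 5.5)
Your proposal is correct and follows essentially the paper's proof: strictly inverse shifts by \(\pm c\), justified by Lemma~\ref{lem:invariant} and the shift-invariance of the gauge and descent equations; the cancellation of \(\alpha_*(A\wedge\lambda)\) against \(\kappa_*(A\wedge t_*\lambda)\) for the curvature; and the Bianchi identities for \(A\) and \(a\) for the \(i\fa\)-valuedness. The only difference is that you run that last computation directly for an arbitrary adapted connection using \(\F=qF_a\), whereas the paper checks it in the normal form \(A_0=qa\) of Section~\ref{sec:curvature-action}; also, your condition \(u\mathcal H_0=0\) should read \(ut_*\mathcal H_0=0\), since \(u\) acts on \(\fg\), not \(\fh\).
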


\begin{proof}
Since \(ut_*\) is the identity on \(V\), shifting \(B\) by \(c=u\F\) gives
\begin{align}
&u(\F-t_*c)=c-ut_*c=0.
\end{align}
Conversely, if \(u\F_0=0\), shifting an adapted connection by \(-c\) gives projected fake curvature \(c\). These constructions are inverse.

Lemma~\ref{lem:invariant} makes the first construction compatible with every gauge arrow because its endpoints have the same \(c\). The shift by \(c\) preserves each gauge arrow and each transformation between arrows, with their original labels. Thus it preserves all composition and all descent equalities. The forms \(c_i\) glue uniquely, and the construction commutes with cover refinement. Pullback commutes with the fixed linear maps, curvature, and the shift \(B_i\mapsto B_i+c_i\), so both maps also commute with pullback. This proves Eq.~\eqref{eq:global-product} and its naturality. Distinct \(c\)'s have no connecting gauge arrow, which is the meaning of the subscript \(\disc\). It does not require \(c\) to be locally constant.

Finally, Eq.~\eqref{eq:infinitesimal-adjustment} gives
\begin{align}
&\Hk(A,B+\lambda)
=\Hk(A,B)+d\lambda
+\alpha_*(A\wedge\lambda)-\kappa_*(A\wedge t_*\lambda)
=\Hk(A,B)+d\lambda.
\end{align}
For an adapted connection its adjusted curvature lies in \(i\fa\). This also follows directly from the Bianchi identity and Eq.~\eqref{eq:adapted-adjustment}, as verified in Section~\ref{sec:curvature-action}. Taking \(\lambda=c\) proves Eq.~\eqref{eq:global-curvature}.
\end{proof}

The theorem is an explicit combination of the invariance and shift identities in Ref.~\cite{Waldorf2026}. We have used no trivialization of the underlying 2-bundle or restriction on the topology of \(M\). Whenever an adapted object exists, every ordinary \(V\)-valued two-form produces an unrestricted adjusted object with the same underlying topological 2-bundle. The shift to the adapted part is a change of variables when \(c\) is kept. It is not a gauge equivalence setting \(c\) to zero.

\subsection{Local fields and classical actions}
\label{sec:local}

\subsubsection{An explicit normal form}

We now use the one-form gauge symmetry to remove a component of \(A\). The invariant field \(c\) remains an independent variable.

\begin{proposition}
\label{prop:normal}
Let \(U\subset M\) be a trivializing open set and let \((A,B)\in\Omega^1(U,\fg)\times\Omega^2(U,\fh)\) be an adjusted connection. Define
\begin{align}
&\begin{aligned}
&a=p_*A,\qquad x=uA,\qquad c=u(F_A-t_*B),\\
&b=jB-j\left(\alpha_*(qa\wedge x)+\tfrac12[x\wedge x]\right).
\end{aligned}
\label{eq:forward}
\end{align}
Here \(a\in\Omega^1(U,\ff)\), \(b\in\Omega^2(U,\fa)\), \(c\in\Omega^2(U,V)\), and \(x\in\Omega^1(U,V)\). The assignment \((A,B)\mapsto(a,b,c,x)\) is invertible, with inverse
\begin{align}
&\begin{aligned}
&A=qa+t_*x,\\
&B=ib+uF_{qa}-c+dx+\tfrac12[x\wedge x]+\alpha_*(qa\wedge x).
\end{aligned}
\label{eq:inverse}
\end{align}
Both substitutions have differential order at most one. The adjusted transformation \((1,x)\) takes the field to
\begin{align}
&A_0=qa,\qquad B_0=ib+uF_{qa}-c,\qquad
\F_0=qF_a+t_*c.
\label{eq:normal}
\end{align}
\end{proposition}

\begin{proof}
The infinitesimal crossed-module identities give
\begin{align*}
&F_{qa+t_*x}
=F_{qa}+t_*\left(dx+\alpha_*(qa\wedge x)+\tfrac12[x\wedge x]\right).
\end{align*}
Substitution into Eq.~\eqref{eq:inverse} yields \(\F=qF_a+t_*c\). Applying \(p_*,u,j\), and using Eq.~\eqref{eq:splitting}, recovers all four fields in Eq.~\eqref{eq:forward}. Conversely, these four projections determine \(A,B\), proving the inverse.

For \(g=1\), Eqs.~\eqref{eq:gauge-A}--\eqref{eq:gauge-B} become
\begin{align}
&A'=A-t_*\varphi,\qquad
B'=B-d\varphi-\tfrac12[\varphi\wedge\varphi]
-\alpha_*(A'\wedge\varphi).
\end{align}
Setting \(\varphi=x\) gives Eq.~\eqref{eq:normal}. Since \(\kappa(1,-)=0\), this transformation satisfies the adjusted gauge law. No condition \(c=0\) is needed.
\end{proof}

The substitutions are inverse on smooth fields, and their differentiated identities hold to every order. They are not assertions about equal-order finite-jet bundles: a derivative of \(c\) may use one more derivative of \(A\). They preserve locality because each substitution and its inverse use finitely many derivatives.

For \(\varphi\in\Omega^1(U,V)\), a pure one-form transformation acts by
\begin{align}
&(a,b,c,x)\longmapsto(a,b,c,x-\varphi).
\label{eq:pure-shift}
\end{align}
The invariance of \(b\) follows by normalizing the transformed field: the composite parameters are \((x-\varphi)+\varphi=x\), giving the same normal representative.

\subsubsection{The full residual gauge structure}

Normalization respects the entire local gauge structure, not just the pure shifts. Let \(X,Y\) denote configurations and let \(N_X=(1,x_X)\) be their canonical normalizing transformations. An arrow \(T=(g,\varphi):X\to Y\) becomes
\begin{align}
&\overline T=N_Y\circ T\circ N_X^{-1}
=(g,x_Y+\varphi-\alpha_gx_X).
\label{eq:normalized-arrow}
\end{align}
A transformation \(h\) between arrows keeps the same label. Equation~\eqref{eq:composition} proves preservation of composition by cancellation of the intermediate \(x_Y\)-terms. The inverse formula is \(\varphi=\overline\varphi-x_Y+\alpha_gx_X\).

Let \(\mathscr C(U)\) be the local gauge bigroupoid of fields \((A,B)\) on \(U\), and let \(\mathscr C_0(U)\) be its full sub-bigroupoid with \(x=0\). Normalization gives the product presentation
\begin{align}
&\mathscr C(U)\cong
\mathscr C_0(U)\times\Pair\bigl(\Omega^1(U,V)\bigr).
\label{eq:local-product}
\end{align}
The pair groupoid has exactly one arrow between every pair of \(x\)-values. All other arrows and all transformations between them are in \(\mathscr C_0(U)\). None are discarded.

The remaining field transformations can be written directly, without the product presentation. Given \(g:U\to G\), put
\begin{align}
&a'=\Ad_{p(g)}a-p(g)^*\bar\theta_F,\qquad
\varphi_g=u\bigl(\Ad_g(qa)-g^*\bar\theta\bigr),
\label{eq:residual-parameters}
\end{align}
where \(p:G\to F\) is the quotient map and \(\bar\theta_F\) its right Maurer--Cartan form. A normalized arrow has \(\varphi=\varphi_g+i\lambda\), with \(\lambda\in\Omega^1(U,\fa)\), and
\begin{align}
&c'=c,\qquad b'=b+\omega_g(a)-d\lambda, \label{eq:residual-b}\\
&\omega_g(a)
=j\left(\alpha_g(uF_{qa})+\kappa(g,qF_a)
-\tfrac12[\varphi_g\wedge\varphi_g]
-\alpha_*(qa'\wedge\varphi_g)\right).
\label{eq:omega-g}
\end{align}
Indeed, \(j\,d\varphi_g=0\), the kernel is central, and the \(c\)-terms cancel through \(\kappa(g,t_*c)=\alpha_gc-c\). The transformations of \(a\) and \(b\) are therefore generally coupled. Equation~\eqref{eq:local-product} does not replace them by an independent product of ordinary gauge groups.

\subsubsection{Local observables, actions and equations}

A finite-jet local density depends smoothly on the fields and finitely many of their derivatives at a point. We write \(\dH\) for the total spacetime derivative of such local expressions. All variational statements below concern a field-coordinate neighborhood on which the path \(x\mapsto sx\), \(0\leq s\leq1\), is defined.

\begin{theorem}[Classical local reduction]
\label{thm:actions}
Let \(U\) be a trivializing open set, and let \(L\) be a finite-jet local Lagrangian density for unrestricted adjusted connections on \(U\). Suppose that the action \(\int_U L\) is invariant off shell under every compactly supported pure shift in Eq.~\eqref{eq:pure-shift}. In the coordinates of Proposition~\ref{prop:normal}, there exist a finite-jet local density \(L_{\mathrm{red}}(a,b,c)\) and a finite-jet local \((\dim U-1)\)-form \(K\) such that
\begin{align}
&L(a,b,c,x)=L_{\mathrm{red}}(a,b,c)+\dH K.
\label{eq:action-descent}
\end{align}
Every strictly shift-invariant local observable depends on no jet of \(x\). The substitution identifies the local Euler--Lagrange solution germs, modulo the \(x\)-shift, with those of \(L_{\mathrm{red}}\).
\end{theorem}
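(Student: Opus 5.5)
Work entirely in the coordinates \((a,b,c,x)\) of Proposition~\ref{prop:normal}. The substitution and its inverse have finite differential order, so \(L\) pulls back to a finite-jet local density in these variables. By Eq.~\eqref{eq:pure-shift}, a pure shift with parameter \(\varphi\) acts only on \(x\), by translation. Off-shell invariance of \(\int_U L\) under every compactly supported \(\varphi\in\Omega^1(U,V)\) gives the following: for every compactly supported variation \(\delta x\), the first variation \(\int_U\langle E_x(L),\delta x\rangle\) vanishes. Here \(E_x\) is the Euler--Lagrange operator with respect to \(x\). Since \(\delta x\) is arbitrary with compact support, the fundamental lemma gives the identity \(E_x(L)\equiv0\) in the jet variables.

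The descent is a homotopy argument along the path \(x\mapsto sx\), \(0\le s\le1\). First set \(L_{\mathrm{red}}(a,b,c)=L(a,b,c,0)\), which is a finite-jet local density. Then write
\begin{align*}
L(a,b,c,x)-L_{\mathrm{red}}
=\int_0^1\frac{d}{ds}L(a,b,c,sx)\,ds.
\end{align*}
Next apply the first variational formula to the vertical vector field \(x\,\partial/\partial x\) and its prolongation:
\begin{align*}
\frac{d}{ds}L(a,b,c,sx)=\langle E_x(L)(sx),x\rangle+\dH\,\sigma(sx;x).
\end{align*}
Here \(\sigma\) is the standard boundary current, obtained by repeated integration by parts of the higher-jet terms; it is finite-jet and linear in the jets of \(x\). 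The first term vanishes identically, so \(K=\int_0^1\sigma(sx;x)\,ds\) gives Eq.~\eqref{eq:action-descent}. This is the usual algebraic Poincar\'e-lemma (homotopy-operator) construction on the fibre coordinates \(x\). The standing assumption that the segment \(sx\) stays in the field neighbourhood makes the formula legitimate. The step I expect to need the most care is this one. One must check that the homotopy formula needs only contractibility in the \(x\)-direction, not of the whole field space, and that \(E_x\) commutes with restriction to \(sx\), so that \(E_x(L)(sx)=0\) pointwise. Both follow because \(E_x(L)=0\) holds as an identity in all jet variables.

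For observables, a strictly shift-invariant local function \(O\) satisfies \(O(a,b,c,x-\varphi)=O(a,b,c,x)\) for all compactly supported \(\varphi\). At any point, the jets of \(x\) and of \(x-\varphi\) can be chosen independently, since Taylor data of a compactly supported \(\varphi\) are arbitrary. Hence \(O\) is constant along every jet of \(x\). Because the fibres are connected vector spaces, \(O\) is independent of those jets.

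For the equations of motion, \(\dH K\) has identically vanishing Euler--Lagrange expressions, and \(L_{\mathrm{red}}\) has no \(x\)-dependence. So the Euler--Lagrange system of \(L\) consists of the \(a,b,c\) equations of \(L_{\mathrm{red}}\), together with the vacuous \(x\)-equation. Local solution germs of \(L\) are therefore pairs consisting of a solution germ of \(L_{\mathrm{red}}\) and an arbitrary \(x\), and these are exactly the orbits of the shift. Finally, transporting back through the invertible substitution of Proposition~\ref{prop:normal} identifies solution germs of the original \((A,B)\) system modulo pure shifts with those of \(L_{\mathrm{red}}\). Locality is preserved since both directions have finite differential order.
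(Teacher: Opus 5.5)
Your proofs of the descent formula and of the observable statement follow the paper's own argument. You derive the Euler identity \(E_x(L)\equiv0\) from compactly supported shifts and the fundamental lemma, use the homotopy \(x\mapsto sx\) with \(L_{\mathrm{red}}=L|_{x=0}\) and \(K=\int_0^1\sigma(sx;x)\,ds\), and obtain \(x\)-independence of invariant observables from the arbitrariness of the shift's jets.

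The gap is in the final step, which you settle by ``transporting back through the invertible substitution.'' The Euler--Lagrange system of \(L\) as a Lagrangian for adjusted connections is \(E_z(L)=0\) with \(z=(A,B)\). Your analysis instead concerns \(E_w(L\circ\Psi)=0\), with \(w=(a,b,c,x)\) and \(z=\Psi(w)\) given by Eq.~\eqref{eq:inverse}, which contains \(dx\) and \(F_{qa}\). For a substitution involving derivatives these two systems are not automatically equivalent. The variational chain rule gives \(E_w(L\circ\Psi)=D\Psi^\dagger E_z(L)\), where \(D\Psi^\dagger\) is the formal adjoint of the linearization. This is a differential operator and can have a kernel, so only the implication \(E_z=0\Rightarrow E_w=0\) follows. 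For instance, take \(L=\tfrac12z^2\) and \(z=\dot w\) in one dimension. Then \(E_w=-\ddot w\) allows \(\dot w=\mathrm{const}\neq0\), which does not solve \(z=0\). The finite differential order of the two maps, which you cite, secures locality but not this equivalence. This step, rather than the routine homotopy formula, is the one that needs care.

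The missing idea is that the inverse \(\Phi\) is itself a differential substitution. Since \(L=(L\circ\Psi)\circ\Phi\), the same chain rule gives \(E_z(L)=D\Phi^\dagger E_w(L\circ\Psi)\), which supplies the reverse implication. This is exactly how the paper argues, without any inverse differential equation or Green operator. Only then does your observation transfer to solution germs of the original \((A,B)\) system. That observation is that \(E_w(L\circ\Psi)=0\) reduces to the \(a,b,c\)-equations of \(L_{\mathrm{red}}\) together with an identically vanishing \(x\)-equation.
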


\begin{proof}
Arbitrary compactly supported shifts imply the Euler--Lagrange identity \(E_x(L)=0\) by the fundamental lemma of the calculus of variations. Applying the finite-jet first-variation formula to \(x_s=sx\) gives
\begin{align}
&\frac{d}{ds}L(a,b,c,x_s)
=E_x(L)\cdot x+\dH\Theta_L(a,b,c,x_s;x)
=\dH\Theta_L(a,b,c,x_s;x).
\end{align}
Here \(\Theta_L\) is the boundary term obtained by integration by parts. Integration over \(s\) proves Eq.~\eqref{eq:action-descent}, with \(L_{\mathrm{red}}=L|_{x=0}\). Integration by parts can increase the derivative order of \(K\), but it remains finite. Compact support was used to obtain the Euler identity. The actual field \(x\) need not have compact support. For a strictly invariant observable, arbitrary finite jets of the shift parameter directly imply independence of all \(x\)-jets.

To compare the equations before the change of fields, write \(z=(A,B)\), \(w=(a,b,c,x)\), \(z=\Psi(w)\), and \(w=\Phi(z)\). Linearizing the two inverse differential substitutions gives \(D\Phi\,D\Psi=1\) and \(D\Psi\,D\Phi=1\) at corresponding jets. The variational chain rule is
\begin{align}
&E_w(L\circ\Psi)=D\Psi^\dagger E_z(L),
\qquad
E_z(L)=D\Phi^\dagger E_w(L\circ\Psi),
\end{align}
where \(\dagger\) is the formal adjoint defined by integration by parts. These identities prove both implications between the equations. The total derivative has no bulk Euler--Lagrange contribution, and the remaining \(x\)-equation is identically zero. No inverse differential equation or Green operator is used.
\end{proof}

If the action has the full adjusted gauge symmetry, \(L_{\mathrm{red}}\) has the residual symmetry transported by Eq.~\eqref{eq:normalized-arrow}. The theorem does not remove interactions among \(a,b,c\). It also does not assert equality of arbitrary Batalin--Vilkovisky (BV) complexes or quantum generating functionals. Such statements require the action-dependent gauge resolution and the treatment of sources and measures, as in the respective local-cohomology and field-redefinition results of Refs.~\cite{BarnichBrandtHenneaux1995,CriadoPerezVictoria2019}.

\section{Dynamics in the reduced variables}
\label{sec:physics}

We now use the substitution to express the adjusted curvature and invariant actions in the reduced variables.

\subsection{Curvature and a quadratic action}
\label{sec:curvature-action}

In the normal form, the adjusted curvature separates as
\begin{align}
&\begin{aligned}
&\Hk=iH_b-dc,\qquad H_b=db+\Omega(a),\\
&\Omega(a)=j\alpha_*(qa\wedge uF_{qa})
+j\kappa_*(qa\wedge qF_a).
\end{aligned}
\label{eq:H-normal}
\end{align}
The \(c\)-terms cancel algebraically: the contribution \(-\alpha_*(qa\wedge c)\) from \(B_0\) cancels \(\kappa_*(qa\wedge t_*c)\). The remaining expression with \(c=0\) is \(\fa\)-valued. To verify this without assuming \(q\) is a Lie algebra map, put \(F=F_{qa}\) and \(Q=qF_a\). Applying \(t_*\) to that expression gives
\begin{align}
d(F-Q)+[qa\wedge(F-Q)]+t_*u[qa\wedge Q]
=(dF+[qa\wedge F])-q(dF_a+[a\wedge F_a])=0.
\end{align}
The last equality is the Bianchi identity for each ordinary connection. Its \(j\)-projection is \(db+\Omega(a)\), because \(ju=0\), proving Eq.~\eqref{eq:H-normal}.

Choose positive inner products on \(\fa,V\) and a positive \(\Ad_F\)-invariant inner product on \(\ff\). The last assumption holds, for example, for compact \(F\). Define splitting metrics by
\begin{align}
&|X|_{\fg}^2=|p_*X|_{\ff}^2+|uX|_V^2,
\qquad
|Y|_{\fh}^2=|jY|_{\fa}^2+|ut_*Y|_V^2.
\label{eq:metrics}
\end{align}
The first metric need not be \(\Ad_G\)-invariant. Its norm on fake curvature is invariant under the actual adjusted gauge law, since \(p_*\F=F_a\) transforms under \(\Ad_F\), while \(u\F=c\) is invariant.

On an oriented Riemannian spacetime, let \(\|\cdot\|^2\) include the corresponding Hodge inner product and integration. For \(e,g_2>0\), consider
\begin{align}
&S_E[A,B]=\frac{1}{2e^2}\|\F\|_{\fg}^2
+\frac{1}{2g_2^2}\|\Hk\|_{\fh}^2.
\label{eq:quadratic-action}
\end{align}
Equations~\eqref{eq:normal} and \eqref{eq:H-normal} give the exact equality
\begin{align}
&S_E=
\frac{\|F_a\|_{\ff}^2}{2e^2}
+\frac{\|db+\Omega(a)\|_{\fa}^2}{2g_2^2}
+\frac{\|dc\|_V^2}{2g_2^2}
+\frac{\|c\|_V^2}{2e^2}.
\label{eq:action-split}
\end{align}
No equations of motion or functional integration were used. The field \(c/g_2\) is an ordinary massive two-form of mass \(m=g_2/e\). In Lorentzian dimension \(D\geq3\), its free massive equation imposes \(d^\dagger c=0\), where \(d^\dagger\) is the formal adjoint of \(d\) for the spacetime metric. In a rest frame its independent components are the spatial \(c_{ij}\). Thus it has \(\dim V\binom{D-1}{2}\) polarizations, including \(10\dim V\) in six dimensions.

The decoupling of \(c\) in Eq.~\eqref{eq:action-split} is a property of this quadratic action. The global product in Theorem~\ref{thm:global} is only a decomposition of gauge data. It permits actions that couple its factors.

\subsection{An interaction with a nonabelian quotient}

We now give a finite-dimensional example in which \(c\) interacts with a nonabelian gauge field. Take
\begin{align}
&G=\SU(2)\times\R,\qquad H=\R\times\U(1),\qquad
t(v,z)=(1,v),
\label{eq:example-module}
\end{align}
with trivial action of \(G\) on \(H\). The image of \(t\) is central, \(H\) is abelian, and the crossed-module identities hold. Its kernel is \(\U(1)\) and its quotient is \(\SU(2)\). At Lie algebra level choose
\begin{align}
&u(X,v)=(v,0),\qquad qX=(X,0),\qquad j(v,z)=z,\qquad\kappa=0.
\end{align}
The zero adjustment is adapted, since \(u(\Ad_gX-X)=0\).

Writing \(A=(a,\alpha)\), \(B=(\beta,b)\), one has \(c=d\alpha-\beta\) and \(\Omega(a)=0\). The transformation with one-form parameter \((\alpha,0)\) gives
\begin{align}
&A'=(a,0),\qquad B'=(-c,b),\qquad\Hk=(-dc,db).
\end{align}
The quadratic action is therefore exactly \(\SU(2)\) Yang--Mills theory, an ordinary abelian two-form gauge field, and an ordinary massive real two-form.

In six dimensions this system also admits
\begin{align}
&S_{\mathrm{int}}=\lambda\int c\wedge P_4(F_a),
\qquad
P_4(F_a)=-\tr(F_a\wedge F_a),
\label{eq:c-interaction}
\end{align}
where the trace is in the defining representation and \(\lambda\in\R\). Both \(c\) and \(P_4\) are gauge invariant. The field \(c\) is a real tensor, not a compact two-form gauge potential, so large two-form gauge transformations impose no quantization condition on \(\lambda\). In Euclidean signature its equation is
\begin{align}
&g_2^{-2}d^\dagger dc+e^{-2}c+\lambda *P_4(F_a)=0.
\label{eq:c-source}
\end{align}
Consequently, imposing \(c=0\) is not a consistent truncation on generic Yang--Mills configurations with \(P_4(F_a)\ne0\).

The cubic vertex is nonzero on physical polarizations. In six-dimensional Lorentzian signature \((-+++++)\), take two null covectors \(k_1=E(dt+dx^5)\), \(k_2=E(dt-dx^5)\), with \(E=m/2\), gauge polarizations \(dx^1,dx^2\) in a common nonzero Cartan generator \(T\), and massive tensor polarization \(dx^3\wedge dx^4\). All polarizations are transverse, and \((k_1+k_2)^2=-m^2\). The vertex contains
\begin{align*}
&-\tr(T^2)\,dx^3\wedge dx^4\wedge
(k_1\wedge dx^1)\wedge(k_2\wedge dx^2),
\end{align*}
a nonzero multiple of the volume form. Thus the invariant tensor can participate in ordinary interactions. Its description does not require adding a new kind of nonabelian tensor field to the reduced theory.

\subsection{Interactions with abelian gauge symmetry}
\label{sec:quartic}

An abelian two-form can also have gauge-invariant curvature interactions. On six-dimensional Minkowski spacetime let \(B\) be a real, nonchiral two-form, \(H=dB\), and
\begin{align}
&I=\frac1{3!}H_{\mu\nu\rho}H^{\mu\nu\rho},
\qquad
\mathcal L=-\frac12 I+\frac{\gamma}{4\Lambda^6}I^2,
\label{eq:quartic}
\end{align}
where \(\Lambda>0\) is a mass scale and \(\gamma\ne0\) is dimensionless. The gauge symmetry remains \(B\mapsto B+d\lambda\). The field dimensions are \([B]=2\), \([H]=3\), so the second term is an irrelevant effective-field-theory interaction, used perturbatively below \(\Lambda\). Its equation is
\begin{align}
&\partial_\mu\left[\left(1-\frac{\gamma}{\Lambda^6}I\right)
H^{\mu\nu\rho}\right]=0.
\label{eq:nonlinear-eom}
\end{align}

Nonlinearity alone would not exclude a free theory written in different variables. We therefore compute a four-point on-shell amplitude. Take all four external momenta in directions \(0,1,2,3\), with each polarization in the \(45\)-plane. These polarizations are transverse and not pure gauge, since \((k_i)_4=(k_i)_5=0\) for every external momentum \(k_i\). On this channel \(B_{45}=\phi\) and \(I=(\partial\phi)^2\), with canonical kinetic term \(-(\partial\phi)^2/2\).

For \(a_4((\partial\phi)^2)^2\), where \(a_4=\gamma/(4\Lambda^6)\), assigning four external legs to the differentiated factors gives eight assignments for each of three pairings. The vertex is
\begin{align}
&i\,8a_4\bigl[
(k_1\cdot k_2)(k_3\cdot k_4)
+(k_1\cdot k_3)(k_2\cdot k_4)
+(k_1\cdot k_4)(k_2\cdot k_3)\bigr].
\end{align}
With all momenta incoming, \(k_i^2=0\), and \(s=-(k_1+k_2)^2\), \(t=-(k_1+k_3)^2\), \(u=-(k_1+k_4)^2\), the four-point amplitude is
\begin{align}
&\mathcal M_4=\frac{\gamma}{2\Lambda^6}(s^2+t^2+u^2).
\label{eq:amplitude}
\end{align}
There is no cubic vertex and hence no exchange contribution at this order. The amplitude is nonzero at generic scattering kinematics. An infinitesimal local field redefinition changes the free action by its equations of motion and a total derivative. Both contributions give a vanishing on-shell contact amplitude. Equation~\eqref{eq:amplitude} therefore detects a genuine interaction, consistently with the field-redefinition analysis in Ref.~\cite{CriadoPerezVictoria2019}.

Ref.~\cite{SaemannSchmidt2020} describes interacting gauge-invariant equations after abelianization as \lq\lq virtually impossible\rq\rq. Ref.~\cite{RistSaemannWolf2026} uses \lq\lq non-interacting\rq\rq\ gauge bosons in its principal-bundle analogy. For the quadratic Yang--Mills action, removing Lie brackets removes that particular source of self-interaction. It does not remove arbitrary gauge-invariant curvature interactions.

The distinction does not invalidate the geometric local-abelianization result or the role of mixed curvatures in heterotic supergravity. Nor does Eq.~\eqref{eq:quartic} provide a chiral tensor theory: a real self-dual three-form in six dimensions satisfies \(H\wedge *H=H\wedge H=0\), so this particular invariant vanishes on that constraint. For formal local deformations of a fixed finite collection of free chiral two-forms in six dimensions, the established deformation theorem permits curvature interactions while excluding a Yang--Mills-like deformation of their abelian gauge symmetry \cite{BekaertHenneauxSevrin2001}. Absence of such a gauge-algebra deformation and absence of interactions are different conclusions.

\section{Index theory, twisted sheaves, and representation theory}
\label{sec:algebra}

The final comparisons concern index and representation-theoretic results and moduli problems. Here the alternative descriptions must preserve families, deformation groups, or multiplication, according to the result being compared. Spectral, module, and line-bundle constructions provide these data. Their equivalence with a gerbe presentation does not by itself shorten the proof of the underlying index or existence theorem.

\subsection{Families of Dirac operators and the index gerbe}

The analytic index gerbe concerns a family of operators, rather than the index of a division algebra. Let \(\pi:M\to S\) be a smooth fibration with closed odd-dimensional spin fibres. Fix a vertical metric, a horizontal distribution, and a Hermitian vector bundle \(W\to M\) with connection. These define a family of twisted Dirac operators \(D_0\), with index in \(K^1(S)\). The degree-three component of the local families-index form is the curvature of the index gerbe constructed in Ref.~\cite{Lott2002}. The question for our comparison is whether the spectral data already determine the entire differential refinement, including its flat phases.

On an open cover \(\{U_\alpha\}\) of \(S\), choose invertible perturbations \(D_\alpha=D_0+h_\alpha(D_0)\), where each smoothing function changes only finitely many eigenvalues. On \(U_\alpha\cap U_\beta\), let \(E_{-+}\) and \(E_{+-}\) be the finite spectral bundles whose eigenvalues change respectively from negative to positive and from positive to negative. Their determinant lines are
\begin{align}
&L_{\alpha\beta}=\det E_{-+}\otimes(\det E_{+-})^{-1}.
\label{eq:index-determinants}
\end{align}
Cancellation of the intermediate spectral subspaces gives canonical determinant identifications \(L_{\alpha\beta}\otimes L_{\beta\gamma}\cong L_{\alpha\gamma}\), compatible on quadruple overlaps. Equip these lines with their projected connections \(\nabla^{\alpha\beta}\), and let \(B_\alpha\) be the real degree-two eta forms, normalized as in Ref.~\cite{Lott2002}. With \(c_1(\nabla)=-F_\nabla/(2\pi i)\) and \(\operatorname{ch}(\nabla)=\operatorname{Tr}\exp(-F_\nabla/(2\pi i))\), the identities are
\begin{align}
B_\beta-B_\alpha=-\frac{F_{\nabla^{\alpha\beta}}}{2\pi i}, \qquad
dB_\alpha=\left[\int_{M/S}\widehat A(\nabla^{T^vM})\operatorname{ch}(\nabla^W)\right]_3.
\label{eq:index-eta}
\end{align}
Here \(T^vM\) is the vertical tangent bundle, \(\widehat A\) is its characteristic form, and the subscript extracts degree three on \(S\). Ref.~\cite{Lott2002} proves these identities and independence of the perturbation choices, including families whose kernel dimension jumps.

The determinant lines, their triple identifications, and the eta forms give all the differential cocycle data. Choosing local frames for the lines converts them directly into the forms and transition functions of Eq.~\eqref{eq:abelian-cocycle}. Thus spectral subspaces and ordinary determinants already supply the surface amplitudes, without a separate geometric construction. The curvature in Eq.~\eqref{eq:index-eta} alone would not suffice: it would discard the flat part identified in Eq.~\eqref{eq:character-exact}. Neither the rational Chern character nor an unrefined \(K^1\) class is being substituted for these full differential data.

There is also an analytic realization after transgression. For a loop in \(S\), pull the odd-dimensional family back to the circle and take its total space as the fibre of an even-dimensional family over \(LS\). Give the circle the bounding spin structure and scale its metric by \(\epsilon^{-2}\). After identifying the determinant lines along the scale parameter, their Bismut--Freed connections \cite{BismutFreed1986II} have an adiabatic limit. Ref.~\cite{Bunke2002} proves that the resulting ordinary Hermitian line bundle with connection is isomorphic to the transgression of the index gerbe. The proof compares holonomies by adiabatic eta invariants, so it establishes more than equality of curvature forms. This realizes the loop-space phases through ordinary determinants. Reconstructing the original surface theory uses the additional fusion data described in Section~\ref{sec:surface}.

\subsection{Projective families and twisted indices}

A second analytic application starts from an elliptic family whose local vector bundles glue only projectively. Let \(\pi:M\to S\) be a smooth fibration of compact manifolds with closed fibres, let \(\mathcal A\) be an Azumaya algebra bundle on \(S\), locally a full matrix algebra over \(\mathbb C\), and let \(P\) be a projective elliptic family with twist \(\pi^*\mathcal A\). If \(r:T^*(M/S)\to M\) is the projection, its symbol class and index satisfy
\begin{align}
&\sigma(P)\in K_c\bigl(T^*(M/S),r^*\pi^*\mathcal A\bigr),\qquad
\operatorname{Ind}_a(P)=\operatorname{Ind}_t(\sigma(P))\in K(S,\mathcal A).
\label{eq:projective-index}
\end{align}
Here \(K_c\) and \(K(S,\mathcal A)\) are degree-zero complex twisted K-groups, with compact support for \(K_c\) \cite{AtiyahSegal2004}, and \(\operatorname{Ind}_a\), \(\operatorname{Ind}_t\) are the analytic and topological index maps. Ref.~\cite{MathaiMelroseSinger2005} constructs these maps with projective bundles and operators and proves their equality. For an even-dimensional spin family with projective twisting bundle \(E\), its characteristic-class formula is
\begin{align}
&\operatorname{Ch}_{\mathcal A}(\operatorname{Ind}_aD)
=\int_{M/S}\widehat A(T^vM)\operatorname{Ch}_{\pi^*\mathcal A}(E).
\label{eq:projective-index-formula}
\end{align}
The capital \(\operatorname{Ch}\) denotes the twisted Chern character. This cohomological formula loses torsion information. The index equality in Eq.~\eqref{eq:projective-index} does not.

Finite-rank Azumaya algebras represent torsion twists. For a class \(\delta\in H^3(S;\mathbb Z)\) that is not torsion on the base but has torsion pullback to \(M\), Ref.~\cite{BenameurGorokhovsky2011} uses local bundles \(E_\alpha\) and ordinary line bundles \(L_{\alpha\beta}\) with specified identifications \(E_\alpha\otimes\pi^*L_{\alpha\beta}\cong E_\beta\). Their compatible descent defines an algebra of projective smoothing operators, and superconnections compute the index character in twisted de Rham cohomology. Forgetting those identifications would lose part of the input. In both settings the gerbe presentation can be replaced by the explicit operator and descent data, but the twist and the analytic index theorem remain. There is no reduction here to the untwisted index of an ordinary globally defined family.

\subsection{Twisted sheaves and Azumaya algebras}

On an \'etale cover \(\{U_i\}\) of a scheme \(X\), let \(\alpha_{ijk}\) be an invertible-function-valued \v{C}ech two-cocycle. For locally free sheaves \(\mathcal V_i\) of rank \(r\), use transition maps \(g_{ij}:\mathcal V_j|_{U_i\cap U_j}\to \mathcal V_i|_{U_i\cap U_j}\). In this convention, twisted gluing means
\begin{align}
&g_{ij}g_{jk}=\alpha_{ijk}g_{ik},\qquad g_{ij}\in\operatorname{GL}_r.
\label{eq:twisted-sheaf}
\end{align}
These data define a locally free \(\alpha\)-twisted sheaf. Conjugation cancels the scalar factor, so their endomorphism algebras have ordinary gluing. A locally free twisted sheaf of rank \(r\) therefore gives an Azumaya algebra of degree \(r\), locally a matrix algebra \(\operatorname{Mat}_r\). Conversely, the gerbe of trivializations of an Azumaya algebra carries the corresponding twisted sheaf. After choosing an Azumaya representative, Morita theory expresses twisted sheaves through its modules \cite{Lieblich2008}.

We make this correspondence explicit by fixing any locally free \(\alpha\)-twisted sheaf \(\mathcal V\) of rank \(m\) and putting \(\mathcal A=\mathcal{E}nd(\mathcal V)\), viewed on the underlying base by descent. The mutually inverse functors are
\begin{align}
&F\longmapsto\mathcal{H}om(\mathcal V,F),\qquad
\mathcal M\longmapsto \mathcal M\otimes_{\mathcal A}\mathcal V,
\label{eq:morita}
\end{align}
where \(\mathcal M\) is a right \(\mathcal A\)-module. Locally these are the usual matrix-algebra Morita equivalence. They commute with base change, preserve flat families and Ext groups, and obey \(\operatorname{rank}_{\mathcal O}\mathcal{H}om(\mathcal V,F)=m\operatorname{rank}(F)\). Consequently, a fixed-rank twisted-sheaf moduli problem and its deformation groups can be expressed as the corresponding module moduli problem. Choosing \(\mathcal V\) here does not assume it has the small rank sought in a period--index bound.

This equivalence makes the role of the gerbe explicit in the period--index problem. For a Brauer class \(\alpha\) over a field \(K\), its period is its order in \(\operatorname{Br}(K)\), and its index is the degree of the central division algebra representing it. The existence of a rank-\(r\) twisted sheaf over the associated gerbe is equivalent to $\operatorname{ind}(\alpha)\mid r$. The dictionary alone is a reformulation. A period--index bound requires an existence theorem for small \(r\).

For a field of transcendence degree two over a finite field, Ref.~\cite{Lieblich2015} proves
\begin{align}
&\operatorname{ind}(\alpha)\mid\operatorname{per}(\alpha)^2.
\label{eq:period-index}
\end{align}
Its proof reduces to prime period \(\ell\) away from the characteristic, absorbs ramification into a root stack without enlarging the function field, and constructs a geometrically integral moduli component of rank-\(\ell^2\) twisted sheaves. An arithmetic argument over a suitable pseudo-algebraically closed extension gives a point, and prime-to-\(\ell\) descent gives the desired bound. The characteristic-primary part has a separate Frobenius argument.

The earlier construction in Ref.~\cite{Lieblich2008} splits ramification by adjoining two \(\ell\)-th roots and then applies an unramified rank-\(\ell\) result. It gives the weaker bound \(\operatorname{ind}\mid\operatorname{per}^3\). Equation~\eqref{eq:period-index} is therefore a stronger result in an independently defined problem. The functors in Eq.~\eqref{eq:morita} transport its twisted-sheaf formulation to modules while preserving the families and deformation groups used in the proof. They do not eliminate the root-stack geometry or prove the required moduli component exists by themselves. The gerbe presentation is replaceable. Whether this yields a shorter proof is a separate question.

\subsection{Twisted K-theory from central extensions and Dirac operators}

Let \(G\) be compact, simple and simply connected, and \(k\geq0\) an integral level. Write \(LG=C^\infty(S^1,G)\). Positive energy means that the generator of a compatible loop-rotation action has spectrum bounded below. The irreducible positive-energy projective representations at level \(k\) generate the Verlinde ring \(R^k(LG)\), whose product is fusion. The Freed--Hopkins--Teleman theorem identifies it with twisted equivariant K-theory,
\begin{align}
&R^k(LG)\cong K_G^{\tau+\dim G}(G),
\qquad [\tau]=k+h^\vee\in H^3_G(G;\mathbb Z),
\label{eq:fht}
\end{align}
where \(G\) acts by conjugation, \(h^\vee\) is the dual Coxeter number, and the basic positive generator identifies \(H^3_G(G;\mathbb Z)\) with \(\mathbb Z\) \cite{FreedHopkinsTeleman2011}. The parity on the right is \(\dim G\bmod2\), and its product is the Pontryagin product with the compatible twisting. This identifies representation-theoretic data relevant to Wess--Zumino--Witten and Chern--Simons theories with a topological calculation.

The construction in Ref.~\cite{FreedHopkinsTeleman2011} uses a central extension of \(LG\), a projective representation, and a family of Fredholm operators. If \(D\) is the affine Dirac operator on the representation tensored with its spin module, the family is
\begin{align}
&D_\mu=D+i\psi(\mu),
\label{eq:dirac-family}
\end{align}
where \(\mu\in L\mathfrak g^*\) carries the affine coadjoint action at level \(k+h^\vee\), and \(\psi\) denotes Clifford multiplication. The spin module contributes the shift \(h^\vee\). For an irreducible representation with lowest weight \(-\lambda\), the kernel is supported on the affine coadjoint orbit of \(\lambda+\rho\) at level \(k+h^\vee\), where \(\rho\) is the half-sum of positive roots. Near that orbit the normal dependence is Clifford multiplication. The family represents the Thom push-forward of the corresponding orbit line bundle. Holonomy identifies the resulting class with the right-hand side of Eq.~\eqref{eq:fht}.

Thus the twisting can be represented by a gerbe, but the representation, operator family, kernel calculation and product already have a central-extension formulation. The multiplication data of the twisting are part of this formulation. Suppressing those data while keeping only an unstructured degree-three class would not reproduce the product in Eq.~\eqref{eq:fht}.

\subsection{Twisted geometric Satake through line bundles}

We finish with geometric Satake, where the gerbe twist can be described through line bundles. Ref.~\cite{FinkelbergLysenko2010} develops this description for almost simple groups. Ref.~\cite{Lysenko2016} extends it to reductive groups. Let \(G\) be a split connected reductive group over an algebraically closed field \(k\), put \(O=k[[t]]\), \(F=k((t))\), fix a positive integer \(N\) invertible in \(k\), and choose a central extension
\begin{align}
&1\longrightarrow\mathbb G_m\longrightarrow E\longrightarrow G(F)\longrightarrow1
\label{eq:satake-extension}
\end{align}
with its splitting over \(G(O)\) and the compatible factorization and \(N\)-th-root data required in Ref.~\cite{Lysenko2016}. Choose an injective character \(\zeta:\mu_N(k)\to\overline{\mathbb Q}_\ell^*\), with \(\ell\ne\operatorname{char}k\). The quotient \(G(F)/G(O)\) is the affine Grassmannian, and \(E/G(O)\) is the nonzero total space of a line bundle over it. The cover \(z\mapsto z^N\) and character \(\zeta\) define a rank-one Kummer local system on \(\mathbb G_m\). The twisted Satake category consists of \(G(O)\)-equivariant perverse \'etale \(\overline{\mathbb Q}_\ell\)-sheaves on \(E/G(O)\) whose monodromy along each \(\mathbb G_m\)-fibre is given by this local system. This describes the gerbe quotient on the complement of the line bundle's zero section.

Convolution is constructed by descent along ordinary torsors. Its commutativity constraint uses the compatible factorization line bundles when two points on a curve move together. With the parity modification of this constraint defined in Ref.~\cite{Lysenko2016}, the resulting tensor category is equivalent to representations of a modified reductive dual group. To see that the twist has a concrete effect, let \(\Lambda\) and \(\check\Lambda\) be the coweight and weight lattices of a maximal torus, choose a Borel subgroup, and let \(\bar\kappa:\Lambda\times\Lambda\to\mathbb Z\) be the Weyl-invariant symmetric bilinear form supplied by the extension. The modified weight lattice is
\begin{align}
&\Lambda^\sharp=\{\lambda\in\Lambda:\bar\kappa(\lambda,-)\in N\check\Lambda\}.
\label{eq:satake-lattice}
\end{align}
Here \(\alpha_i\) is a simple coroot of \(G\) and \(\check\alpha_i\) is its corresponding root. The modified dual group's simple roots and coroots are
\begin{align}
&\delta_i\alpha_i,\qquad\check\alpha_i/\delta_i,
\qquad
\delta_i=\operatorname{denominator}\left(\frac{\bar\kappa(\alpha_i,\alpha_i)}{2N}\right).
\label{eq:satake-roots}
\end{align}
The denominator is computed by restricting a line bundle to a copy of \(\mathbb P^1\) arising from a rank-one Levi subgroup and testing when its Kummer monodromy is trivial \cite{Lysenko2016}.

This construction uses tensor categories and substantial geometry, but the gerbe can be represented by the displayed equivariant line-bundle model. The splitting, character and factorization isomorphisms remain part of the construction. This does not reduce the entire Satake theorem to a calculation of a cohomology class.

\section{Discussion}
\label{sec:scope}

In this work, we have compared principal 2-bundle presentations with explicit alternatives for specified physical and mathematical results. Some comparisons use established equivalences that preserve the required gluing, flat phases, and other compatibility data. Others simplify the task by evaluating a field sum or eliminating a pure gauge variable. The distinction matters: an alternative presentation can suffice for a result without shortening every proof or removing its underlying structure.

For finite gauge theories, Fourier transformation performed the two-form field sum, including the general three-dimensional cohomological action, and gave ordinary finite-group gauge sums with their sectors, actions, and normalization. Compact BF theory also supplied an explicit particle--string linking amplitude. The four-dimensional quadratic example showed a different outcome: its Pontryagin-square action gives a direct Gauss-sum evaluation and an invertible signature response, which cannot be absorbed into an ordinary finite-group gauge sum with only an Euler factor. This separates the failure of that particular ordinary-gauge replacement from the need for a gerbe formalism.

For principal 2-bundles with the adjusted connections considered here, we gave an explicit differential substitution. It separates the pure gauge one-form \(x=uA\) from the gauge-invariant \(V\)-valued two-form \(c=u\F\). We showed how local actions descend to \(a,b,c\), and worked out their kinetic terms and interactions. This conclusion concerns the nonabelian 2-bundle theory, not only its abelian gerbe special case. Globally, the adapted data admit the lifting description of Ref.~\cite{Waldorf2026}: an ordinary quotient bundle together with a compatible trivialization of its abelian Chern--Simons bundle 2-gerbe. The latter can be written as differential cocycles. Thus the global replacement also includes the gluing data. It does not consist of the quotient bundle alone.

The tetrahedral flux integral in the magnetic-translation example computes its associator, but does not resolve the Jacobi obstruction to ordinary operator quantization. The Euclidean/Poincar\'e BFCG boundary identity identifies a classical change of polarization, but does not supply the constraints or quantum dynamics of gravity. Global anomaly cancellation still needs quantization and quadratic refinements. The analytic and algebraic applications still need operator theory, moduli spaces, or equivariant sheaves. Replacing the 2-bundle presentation preserves those ingredients and their difficulty.

A stronger necessity claim appears in the heterotic-supergravity discussion of Ref.~\cite{RistSaemannWolf2026}, which calls a formalism such as its own ``indispensable'' for obtaining local cocycles. The reason given is that the Kalb--Ramond field is coupled to the ordinary gauge connection and cannot be patched independently as an abelian gerbe. Any replacement must reproduce that coupling, including finite gauge transformations and global consistency. Once these equations are supplied, they can also be used directly. The need to reproduce the coupling does not establish the necessity of the categorical presentation. The infinite-dimensional String model used in Ref.~\cite{RistSaemannWolf2026} lies outside our finite-dimensional reduction, so that reduction is not a proof of its complete dynamical equivalence to an ordinary model.

The surface comparisons preserve amplitudes, the finite Fourier formulas preserve closed quantum partition functions, and the smooth-field reduction preserves local classical dynamics in the stated finite-dimensional setting. A quantum comparison for a smooth model would also have to specify its measure, gauge fixing, and observables \cite{BarnichBrandtHenneaux1995,CriadoPerezVictoria2019}. These further choices do not alter the comparisons already established. For those results, the explicit alternatives supply the needed information and operations without the additional 2-bundle presentation. A higher geometric interpretation alone does not establish that the presentation is necessary.

\paragraph{Disclosure of AI Use.}
The author used GPT-6 Astra (OpenAI) to assist with exploring proof strategies and preparing the manuscript. The author directed the research and takes full responsibility for the content of the final manuscript.

\bigskip
\phantomsection
\addcontentsline{toc}{section}{References}
{\renewcommand{\baselinestretch}{1}\small\raggedright
\bibliographystyle{alpha}
\bibliography{references}
}
\end{document}